\DeclarePairedDelimiter\ceil{\lceil}{\rceil}
\def\EQ#1{\begin{eqnarray}#1\end{eqnarray}}
\def\H {{\mathcal H}}
\newcommand{\AR}[2][c]{$$\begin{array}[#1]{lllllllllllllll}#2\end{array}$$}
\begin{document}
\frontmatter          
\pagestyle{plain}  

\mainmatter              

\title{The Quantum Cut-and-Choose Technique and Quantum Two-Party Computation}
\author{Elham Kashefi$^{1,2}$, Luka Music$^2$ and Petros Wallden$^1$}

\institute{1. School of Informatics, University of Edinburgh,10 Crichton Street, Edinburgh EH8 9AB, UK\\
2. Departement Informatique et Reseaux, UPMC - LIP6, 4 Place Jussieu 75252 Paris CEDEX 05, France}

\maketitle              

\begin{abstract}
The application and analysis of the Cut-and-Choose technique in protocols secure against quantum adversaries is not a straightforward transposition of the classical case, among other reasons due to the difficulty to use ``rewinding'' in the quantum realm. We introduce a Quantum Computation Cut-and-Choose (QC-CC) technique which is a generalisation of the classical Cut-and-Choose in order to build quantum protocols secure against \emph{quantum covert adversaries}. Such adversaries can essentially deviate arbitrarily provided that their deviation is not detected with high probability. As an application of the QC-CC technique we give a protocol for securely performing a two-party quantum computation with classical input and output. As a basis we use the concept of secure delegated quantum computing \cite{bfk}, and in particular the protocol for quantum garbled circuit computation of \cite{KW16} that has been proven secure against only a weak specious adversaries (defined in \cite{DNS10}). A unique property of these protocols is the separation between classical and quantum communications and the asymmetry between client and server, which enables us to sidestep the issues linked to quantum rewinding. This opens the possibility of applying the QC-CC 
technique to other quantum protocols that have this separation. In our proof of security we adapt 
and use (at different parts of the proof) two quantum rewinding techniques, namely Watrous' oblivious quantum rewinding \cite{Wat09} and  Unruh's special quantum rewinding \cite{Unr12}. Our protocol achieves the same functionality as in the previous work on secure two-party quantum computing such as the one in \cite{DNS12}, however using the Cut-and-Choose technique on the protocol from \cite{KW16} leads to the following key improvements: (i) only one-way offline quantum communication is necessary , (ii) only one party (server) needs to have involved quantum technological abilities, (iii) only minimal extra cryptographic primitives are required, namely one oblivious transfer for each input bit and quantum-safe commitments.
\end{abstract}

\section{Introduction}

A key task in modern cryptography is to compute a function of many inputs given by different parties that do not trust each other and wish to maintain the privacy of their input. This is called secure multi-party computation (to name some examples: millionaire's problem, coin tossing, voting schemes, etc). 
The field started with the seminal paper of Yao \cite{Yao86}, where two-parties that do not trust each other (they are ``honest-but-curious'') compute a function of their joint inputs. This protocol was later made secure against malicious adversaries by employing standard (classical) techniques for boosting the security of honest-but-curious protocols to the malicious adversarial setting (e.g. using the GMW compiler as in \cite{GMW87}). Another such technique is the Cut-and-Choose technique first used in this context in \cite{LinPin07}.

The quantum analogue (secure two-party quantum computation, or 2PQC) involves the computation of a function using a quantum computer and was first examined in \cite{DNS10} for a quantum honest-but-curious adversaries (called specious) and later made secure against more malicious adversaries in \cite{DNS12}. The latter protocol, did not use any of the standard boosting classical techniques, but instead used a stepwise quantum authentication protocol, where two-ways online quantum communication was required. Moreover, both protocols use extra classical cryptographic primitives, which in the case of the malicious \cite{DNS12} is a full actively secure \emph{classical} two-party computation primitive. 

The use of classical boosting techniques (such as Cut-and-Choose) for quantum protocols is complicated not only because specific care is needed when defining quantum analogues, for example, for garbled circuits but also for technical reasons since the rewinding method for proving security cannot be directly used in quantum protocols (as demonstrated for zero knowledge proofs \cite{Wat09} and zero knowledge proofs of knowledge \cite{Unr12}).

\paragraph{Our Contribution}

\begin{enumerate}

\item We introduce a Quantum Computation 
Cut-and-Choose technique. Application of this technique is made possible because of the unique decomposition of quantum computation into a classical control and a quantum resource in the measurement-based quantum computing models such as gate teleportation. This separation furthermore provides a platform for a client-server setting for secure delegated computing \cite{bfk}.

\item We give a protocol for 2PQC with classical input and output\footnote{Note, that even though we evaluate a classical function we still need quantum computation if this function cannot be efficiently computed (in the honest case) with a classical computer (e.g. functions that involve factoring) and thus classical techniques are not applicable.} which is secure against ``quantum covert'' adversaries, a notion of strong adversaries similar with the classical covert adversaries \cite{AumLin07} (see below for formal definition and motivation). We use the aforementioned QC-CC technique and address the subtleties in the security proof due to rewinding. Our protocol, which builds on the work of \cite{KW16} that gave a protocol for 2PQC secure against weak specious adversaries, resembles the original protocol by Yao \cite{Yao86} (e.g. asymmetry between the two parties) and in particular the one in \cite{LinPin07}, where Yao's protocol is boosted to the malicious case using the classical Cut-and-Choose technique. 

\item A key obstruction when using classical techniques for boosting the security of quantum protocols is that in general rewinding the \emph{quantum} adversary during the simulation is \emph{not} possible. There are two known cases where rewinding can be used for quantum adversaries, namely Watrous' oblivious rewinding \cite{Wat09} and Unruh's special rewinding \cite{Unr12}. We adapt and use both methods in different places in order to construct the simulators and prove the security of our protocol. This is one of the few protocols in which quantum rewinding is explicitly used and the only one, to our knowledge, that uses two 
types of quantum rewinding.

\item \cite{DNS12,DNS10} describe 2PQC protocols in a setting where the two parties are symmetric. Our protocol crucially differs in a number of points (other than using the Cut-and-Choose technique): (i) There is only one-way offline quantum communication between the parties, (ii) only one party (``server'') needs involved quantum technological abilities, while the other (``client'') only needs to prepare offline single qubits, (iii) minimal classical cryptographic primitives are required, namely oblivious transfer for input bits and quantum-safe commitments.
\end{enumerate}

\noindent 
In Section \ref{prelims} we present background material and introduce the notion of \emph{quantum covert adversaries} (similarly with \cite{AumLin07}). 
In Section \ref{quantum cc} we introduce the quantum-computation Cut-and-Choose technique. In Section \ref{prot} we give the protocol for 2PQC and 
prove its security in Appendix \ref{sec proofs}.

\paragraph{Related works.} The field of secure two (and multi) party (classical) computation started with Yao's paper \cite{Yao86}, which was proven secure against malicious adversaries in \cite{GMW87} using generic Zero-Knowledge proofs and in \cite{LinPin07} with the Cut-and-Choose technique.  Covert adversaries were introduced in \cite{AumLin07} where again the Cut-and-Choose technique was used to achieve an even more efficient protocol. Yao's protocol has been used for a number of other functionalities, such as constructing non-interactive verifiable computing \cite{GGB10}. 

In the early days of quantum computation, researchers believed that quantum properties could lead to a breakthrough and achieve, with unconditional security, several (clasical) multi-party cryptographic primitives. However a series of no-go theorems, first proving that bit commitment is impossible \cite{commit1,commit2} , then oblivious transfer \cite{OT_nogo} and finally \cite{SSS09} showed that any non-trivial functionality leaks some information to adversaries. Since then, it is established that any such protocol is either only computationally secure or requires the existence of certain (quantum secure) simple cryptographic primitives.

Closely related is the question of what assumptions are required if one wants to perform a secure \emph{quantum} computation involving multiple parties. The case of 2PQC was addressed in \cite{DNS10} for quantum honest-but-curious and in \cite{DNS12} for malicious adversaries. The case of multiple parties was addressed in \cite{Ben-Or2006,multipartyqc} where an honest majority was required. 

In this work we use as basis the universal blind quantum computation protocol \cite{bfk} and its verifiable version \cite{fk}. For the case of weak specious adversaries, the 2PQC was addressed in \cite{KW16} while the multiparty quantum computation was also addressed in \cite{KP16}, again in a restricted setting. While we use measurement-based quantum computation (MBQC) \cite{onewaycomputer}, similar blind and verification protocols exist in the teleportation model \cite{B2015} and 2PQC or MPQC protocols could be explored for that case as well as other blind verification protocols such as \cite{hm2015}.

\section{Preliminaries and Security Definitions}\label{prelims}

\subsection{Verifiable Blind Quantum Computation}

The model for quantum computations used in our contribution is MBQC \cite{onewaycomputer}. In this section we introduce MBQC and revise protocols for blind quantum computation (server performs computation without learning input/output or computation) \cite{bfk} and verifiable blind quantum computation (client can also verify that the computation was performed correctly) \cite{fk} which are based on it.

The MBQC model of computation is equivalent to the circuit model as it is based 
on the gate teleportation principle. One starts with a large, generic entangled state (represented by a graph) and, by choosing suitable single qubit measurements, can perform any quantum computation (circuit). The computation is fully characterised by the graph and default measurement angles (see below) and is called a measurement pattern. 
See 
an example 
of a universal set of gates expressed as MBQC measurement patterns 
in Appendix \ref{example mbqc}. 

For our purpose it will be simpler to 
consider a client-server setting. The client can prepare single qubits 
while the server can perform any general quantum computation. The client prepares and sends qubits in the $\ket{+}$ state and the server entangles them according to a certain computation graph by performing $\mathrm{controlled-}Z$ gates between all qubits corresponding to adjacent vertices on the graph, resulting in a \emph{graph state} (details can be found in \cite{hein2004multiparty}). The computation is defined by a default measurement angle $\phi_i$ (which depends only on the desired computation). It is carried out by having the server measure single qubits in an order defined by the flow. The actual angle of each measurement depends on $\phi_i$ and on the outcomes of previous measurements. In our setting, the client is responsible for these classical calculations to adjust the angle and therefore the server returns to the client the result of each measurement, for more details see \cite{bfk}.

Let $I$ and $O$ be respectively the sets of input and output qubits. A flow is defined by a function ($f : O^c \rightarrow I^c$) from measured qubits to non-input qubits and a partial order $(\preceq)$ over the vertices of the graph such that $\forall i, i \preceq f(i)$ and $\forall j \in N_G(i), f(i) \preceq j$, where $N_G(i)$ denotes the neighbours of $i$ in graph $G$. Each qubit $i$ is $X$-dependent on $X_i = f^{-1}(i)$ and $Z$-dependent on all qubits $j$ such that $i \in N_G(j)$ (this set is called $Z_i$). The existence of such a flow in all the graphs used for computations in MBQC patterns guarantees that the number of dependencies does not blow up. 
Given the sets $X_i$ and $Z_i$ the computation angle for qubit $i$ needs to be adjusted as such : let $s_i^X = \oplus_{j \in D_i^X} s_j$ and $s_i^Z = \oplus_{j \in D_i^Z} s_j$, where $s_j$ corresponds to the outcome of the measurement on qubit $j$ and $D_i^X$ and $D_i^Z$ are subsets of $X_i$ and $Z_i$ respectively (these qubits in $D_i^X$ and $D_i^Z$ have all already been measured as they belong to the past neighbours and past neighbours of past neighbours, see the flow construction in \cite{DK2006}). Then the corrected angle (the one that is actually measured) is $\phi'_i = (-1)^{s_i^X}\phi_i + s_i^Z\pi$.

The computation can be totally hidden from the server due to the following observation: if instead of sending $\ket{+}$ states the client chooses at random and sends $\ket{+_\theta}=1/\sqrt{2}(\ket{0}+e^{i\theta}\ket{1})$ with $\theta \in \{0, \pi/8, 2\cdot\pi/8,\ldots, 7\cdot\pi/8\}$ 
then measuring the qubits in a similarly rotated basis has the same result as the initial non-rotated computation. If the client keeps the angle $\theta$ hidden from the server, the server is completely blind on what computation is being performed. To ensure that no information is leaked from the measurement outcome,  
we add another parameter $r_i$ for each qubit, which serves as a One-Time-Pad for the measurement outcome. The resulting measurement angle with all parameters taken into account is then $\delta_i = C(\phi_i, s_i^X, s_i^Z, \theta_i, r_i) = \phi'_i(\phi_i, s_i^X, s_i^Z) + \theta_i + r_i\pi$. In short, the client sends rotated qubits (to become the resource state once entangled) and then guides the computation with a set of classical instructions. It is the combination of these two parts (quantum state preparation and classical instructions) that leads to the desired blind computation. This idea was formalised in the \emph{universal blind quantum computation} (UBQC) protocol in \cite{bfk}. 

We denote $ran_i$ collectively all the parameters that $\delta_i$ depends on other than $\phi_i$. We use bold and suppress the subscript that refers to a particular qubit in the graph to denote a full string, e.g. $\boldsymbol{\phi}:=\{\phi_1,\cdots,\phi_N\},\mathbf{ran}=\{ran_1,\cdots,ran_N\}$.
We then define the past of qubits, which allows us to calculate upper bounds on the number of dependencies for various computation graphs.

\begin{definition}[Past of qubit $i$ and Influence-past of qubit $i$] We define $P_i = Z_i \cup X_i$ to be the set of qubits $j$ that have $X$ or $Z$ dependency on $i$. We define influence-past $c_i$ of qubit $i$ to be an assignment of an outcome $b_j \in \{0, 1\}$ for all qubits $j \in P_i$. 
\end{definition}

For example the brickwork state \cite{bfk}, which can be used for universal quantum computation, has for each qubit a single X-dependent qubit and at most two Z-dependent qubits, and so the cardinality $\abs{P_i}$ is at most $3$ for all $i$. To each influence-past $c_i$ corresponds a unique value of $\delta_i$ (the corrected measurement angle for this influence-past). Note that we will denote $\boldsymbol{\delta}$ the set of instructions that include the measurement angles of each qubit, for all alternative influence-pasts, i.e. this is not a string of $N$ measurement angles $\delta_i$ but of $\sum_{i=1}^N|P_i|$ measurement angles of the form $\delta_i(c_i)$, where we have $|P_i|$ angles per qubit.

In UBQC, the server is not forced to follow the instructions and the client cannot verify if the computation is done correctly. One can modify the protocol to allow for such verification (see Theorem \ref{vbqc verif}),
as was first done in \cite{fk}. The central idea is to include trap qubits at positions unknown to the server. The client can send states from $\{\ket{0},\ket{1}\}$ (called dummies), which have the effect of breaking the graph at this vertex, removing it along with any attached edges. This can be used to generate isolated qubits in the graph in a way that is undetectable by the server. These isolated qubits do not affect the computation while they have deterministic outcome if measured in the correct basis. They can therefore be used as traps: a client can easily detect if one of them has been measured incorrectly but the server is ignorant of their position in the graph. This idea was introduced in \cite{fk} and later optimised by different protocols, such as \cite{KW15} which we use here. The reason to use \cite{KW15}, other than efficiency, is because the construction is ``local'' and the server can obtain some information about the true graph (needed for 2PQC) without compromising the security. 
The construction of the resource 
given a base-graph $G$ (graph that the UBQC computation without traps requires), that has vertices $v\in V(G)$ and edges $e\in E(G)$, is the following: 

\begin{enumerate}
\item For each vertex $v_i$, we define a set of three new vertices $P_{v_i}=\{p^{v_i}_1,p^{v_i}_3,p^{v_i}_3\}$. These are called \emph{primary} vertices.

\item Corresponding to each edge $e(v_i,v_j)\in E(G)$ of the base-graph that connects the base vertices $v_i$ and $v_j$, we introduce a set of nine edges $E_{e(v_i,v_j)}$ that connect each of the vertices in the set $P_{v_i}$ with each of the vertices in the set $P_{v_j}$.

\item We replace every edge in the resulting graph with a new vertex connected to the two vertices originally joined by that edge. The new vertices added in this step are called \emph{added} vertices. This is the \emph{dotted triple-graph} $DT(G)$.
\end{enumerate}

The edge or vertex of the initial graph that each vertex $v\in DT(G)$ belongs to is called its \emph{base-location}. We can see that by inserting dummy qubits among the added qubits we can break the $DT(G)$ in three copies of the same base-graph: one will be used for the computation while the other two can be used 
as traps. Furthermore for each vertex base-location the choice of where to break the graph is independent from other vertex base-locations and can be made in advance by the client. The server remains totally ignorant of this choice. This choice is called \emph{trap-colouring}.

\begin{definition}[Trap-Colouring, taken from \cite{KW15}]\label{trap colouring} We define trap-colouring to be an assignment of one colour to each of the vertices of the dotted triple-graph that is consistent with the following conditions: 

\begin{enumerate}
\item Primary vertices are coloured in one of the three following colours: white or black (for traps), or green (for computation). 
\item Added vertices are coloured in one of the four following colours: white, black, green or red. 
\item In each primary set $P_v$ there is exactly one vertex of each colour. 
\item Colouring the primary vertices fixes the colours of the added vertices: added vertices that connect primary vertices of different colour are red, added vertices that connect primary vertices of the same colour get that colour.
\end{enumerate}
\end{definition}

\begin{figure}
\includegraphics[width=1\columnwidth]{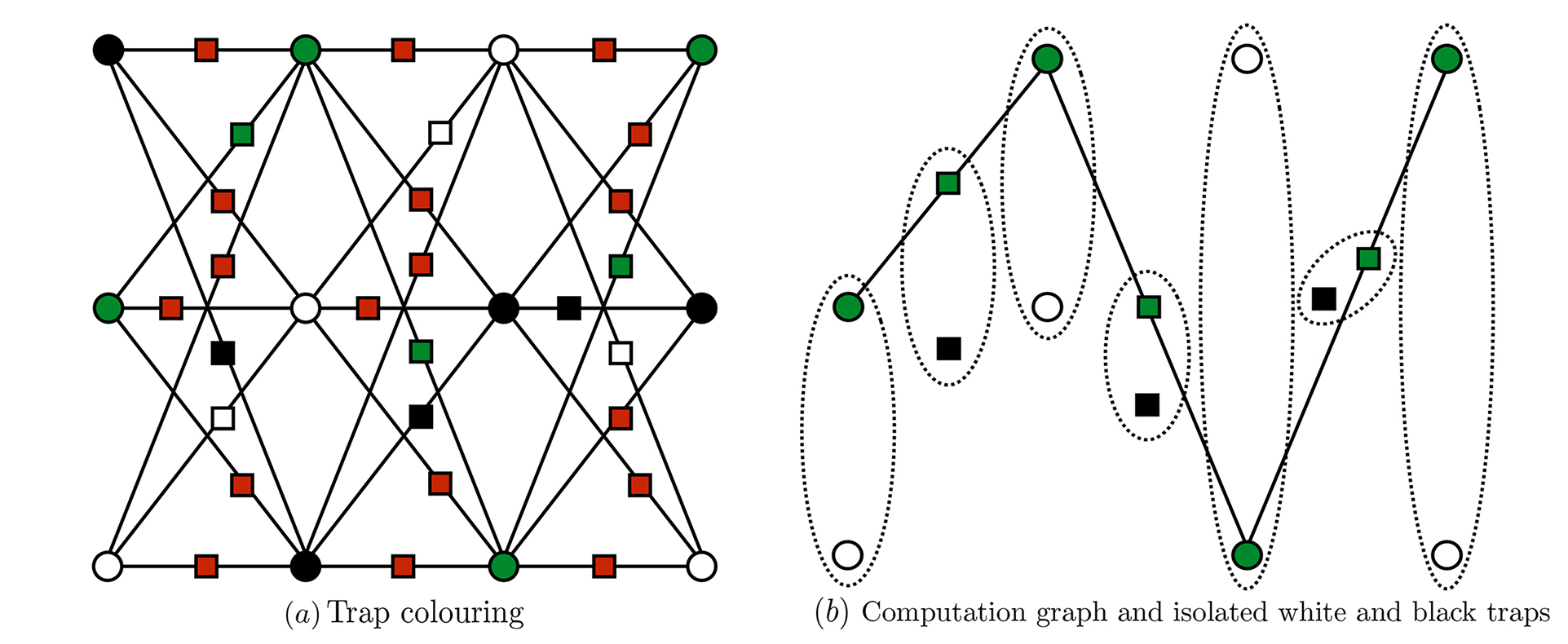}

\caption{Dotted-triple-graph for one-dimensional base graph of four qubits. Circles: primary vertices (base-location : vertex of the base-graph); Squares: added vertices (base-location : edge of the base-graph). (a) Trap-colouring. Green: computation qubits; White/black: trap qubits; Red: dummy qubits. Client chooses the trap-colouring, prepares each qubit individually and sends them one by one for the server to entangle according to the generic construction. (b) After entangling, the breaking operation defined by the dummy qubits will reduce the graph in (a) to the computation graph and for each vertex a corresponding trap/tag qubits.}
\label{figure3}
\end{figure}
The flow, Past of qubit $i$ and Influence-past of qubit $i$ can all be extended to the Dotted-Triple-Graph construction, with the result that each qubit still depends on a constant number of previous measurements (see \cite{KW15}).

For completeness we give the 
verification protocol from \cite{KW15} that we use:

\begin{algorithm}[H]
\caption{Verifiable Universal Blind Quantum Computation using dotted triple-graph (with Fault-tolerant Encoding) - Taken from \cite{KW15}}
\label{prot:KW15}
We assume that a standard labeling of the vertices of the dotted triple-graph $DT(G)$ is known to both the client and the server. The number of qubits is at most $3N(3c+1)$ where $c$ is the maximum degree of the base graph $G$.\\
\noindent$\bullet$ \textbf{Client's resources} \\
-- Client is given a base graph $G$. The corresponding dotted graph state $\ket{D(G)}$ is generated by graph $D(G)$ that is obtained from $G$ by replacing every edge with a new vertex connected to the two vertices originally joined by that edge.\\
-- Client is given an MBQC measurement pattern $\bbbm_{\textrm{Comp}}$ which when applied to the dotted graph state $\ket{D(G)}$ performs the desired computation, in a fault-tolerant way that can detect or correct errors fewer than $\delta/2$.\\
-- Client generates the dotted triple-graph $DT(G)$ and selects a trap-colouring according to definition \ref{trap colouring}, which is done by choosing independently the colours for each set $P_v$.\\
-- Client for all red vertices will send dummy qubits and thus perform break operation.\\
-- Client chooses the green graph to perform the computation.\\ 
-- Client for the white graph will send dummy qubits for all added qubits $a^{e}_w$ and thus generate white isolated qubits 
at each primary vertex set $P_{v}$. Similarly for the black graph the client will send dummy qubits for the primary qubits $p^v_b$ and thus generate black isolated qubits 
at each added vertex set $A_{e}$.\\
\noindent -- The set $D$ of the positions of dummy qubits is chosen as defined above (fixed by the trap-colouring).\\
\noindent -- A binary string $\mathbf{s}$ of length at most $3N(3c+1)$ represents the measurement outcomes. It is initially set to all zeros.\\
\noindent -- A sequence of measurement angles, $\phi=(\phi_i)_{1 \leq i \leq 3N(3c+1)}$ with $\phi_i \in A = \{0, \pi/4, \cdots, 7\pi/4\}$, consistent with $\bbbm_{\textrm{Comp}}$. We define $\phi_i'(\phi_i,\mathbf{s})$ to be the measurement angle in MBQC, when corrections due to previous measurement outcomes $\mathbf{s}$ are taken into account (the function depends on the specific base-graph and its flow, see e.g. \cite{bfk}). We also set $\phi'_i = 0$ for all the trap and dummy qubits. 
\noindent -- The Client chooses a measurement order on the dotted base-graph $D(G)$ that is consistent with the flow of the computation (this is known to the Server). The measurements within each set $P_v$ and $A_e$ of $DT(G)$ are ordered randomly.
\\
\noindent -- $3N (3c+1)$ random variables $\theta_i$ with values taken uniformly at random from $A$.\\
\noindent -- $3N (3c+1)$ random variables $r_i$ and $|D|$ random variable $d_i$ with values taken uniformly at random from $\{0,1\}$.\\
\noindent -- A fixed function $C(i, \phi_i, \theta_i, r_i, \mathbf{s}) = \phi_i'(\phi_i,\mathbf{s}) +\theta_i + r_i \pi$ that for each non-output qubit $i$ computes the angle of the measurement of qubit $i$ to be sent to the Server.

(Continues on next page)
\end{algorithm}

\begin{algorithm}[H]
\caption{continuing: VUBQC with DT(G)}
\noindent$\bullet$ \textbf{Initial Step}\\
-- \textbf{Client's move:} Client sets all the value in $\mathbf{s}$ to be $0$ and prepares the input qubits:

\AR{
\ket e = X^{x_1} Z(\theta_1) \otimes \ldots \otimes  X^{x_l} Z(\theta_l) \ket I
}
and the remaining qubits in the following form
\AR{
\forall i\in D &\;\;\;& \ket {d_i} \\ 
\forall i \not \in D &\;\;\;& \prod_{j\in N_G(i) \cap D} Z^{d_j}\ket {+_{\theta_i}}
}
and sends the Server all the $3N (3c+1)$ qubits in the order of the labelling of the 
graph.

-- \textbf{Server's move:} Server receives $3N(3c+1)$ single qubits and entangles them according to $DT(G)$.

\noindent$\bullet$ \textbf{Step $i : \; 1 \leq i \leq 3N (3c+1)$}

-- \textbf{Client's move:} Client computes the angle $\delta_i = C(i, \phi_i, \theta_i, r_i, \mathbf{s})$ and sends it to the Server.\\ 
-- \textbf{Server's move:} Server measures qubit $i$ with angle $\delta_i$ and sends the result $b_i$ to the Client.\\
-- \textbf{Client's move:} Client sets the value of $s_i$ in $\mathbf{s}$ to be $b_i + r_i$.\\

\noindent$\bullet$ \textbf{Final Step:}

-- \textbf{Server's move:} Server returns the last layer of qubits (output layer) to the Client.\\

\noindent$\bullet$  \label{step:Alice-prep} \textbf{Verification}\\
-- After obtaining the output qubits from the Server, the Client measures the output trap qubits with angle $\delta_t = \theta_t + r_t \pi$ to obtain $b_t$.

-- Client accepts if $b_i = r_i$ for all the white (primary) and black (added) trap qubits $i$.

-- Client applies corrections according to measurement outcomes $b_i$ and secret parameters $\theta_i, r_i$ at the output layer green qubits and obtains the final output.

\end{algorithm}

\begin{theorem}[Verifiability of VUBQC, taken from \cite{KW15}]\label{vbqc verif}
Protocol \ref{prot:KW15} is $\epsilon_{2}$-verifiable for the Client, where $\epsilon_{2} = \qty(\frac{8}{9})^{d}$ for $d = \ceil*{\frac{\delta}{2(2c+1)}}$. $\epsilon_2$-verifiability means that for any strategy of the server the real state is $\epsilon_2$-close (in trace distance) to $\rho_{ideal}(\rho_{in},p_{ok})$ for some $0\leq p_{ok}\leq 1$ where $\rho_{in}$ is the initial state and $\rho_f:=U(\rho_{in})$ is the desired final state:

\AR{\rho_{ideal}(\rho_{in},p_{ok})&:=&p_{ok}\rho_f +(1-p_{ok})([\mathsf{abort}_1])
}
\end{theorem}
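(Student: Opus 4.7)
The plan is to reduce the server's arbitrary quantum attack to a convex combination of Pauli deviations, then bound the joint probability that such a deviation corrupts the logical output while escaping detection by every trap. First, because each non-input qubit is prepared with a uniformly random phase $\theta_i \in A$ and each measurement angle is one-time-padded by $r_i\pi$, a standard twirling argument converts any CPTP map applied by the server into a convex mixture of Pauli deviations applied at the input of the ideal computation; so without loss of generality we analyse Pauli attacks. Simultaneously, the blindness of UBQC together with the uniform, location-wise independent choice of trap-colouring ensures that the Pauli chosen by the server is statistically independent of the colouring, and detection probabilities can therefore be averaged purely over the colours.

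Next I would carry out a per-location case analysis. For each base-location -- a primary set $P_v$ or an added set $A_e$ of three qubits, one of each colour -- I would enumerate the possible non-trivial Pauli patterns on that location and compute, over the six colourings, the probability that both the white and the black qubits receive an identity component while the green qubit receives a non-identity one. This yields a uniform per-location non-detection probability of at most $8/9$. The delicate point is that one must simultaneously handle $X$-type and $Z$-type errors, including errors placed on red dummy qubits that propagate through the $\mathrm{C}Z$ entanglement and flip effective $Z$-corrections on their neighbours, as well as account for the randomised intra-set measurement order which removes any adaptive advantage in targeting a specific colour.

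To close the argument I would couple this per-location bound with the fault-tolerant encoding of the computation pattern, which tolerates weight-$\delta/2$ logical errors. Because a deviation at any single base-location can infect only a bounded number of logical qubits -- governed by the graph degree $c$ -- at least $d = \lceil \delta/(2(2c+1)) \rceil$ distinct base-locations must each carry undetected non-trivial Pauli errors to corrupt the output beyond the fault-tolerant threshold. Since the colouring choices across distinct base-locations are independent, the joint non-detection probability is at most $(8/9)^d$. Standard trace-distance accounting then writes the server's output as a mixture of an $[\mathsf{abort}_1]$ branch whenever any trap triggers and a branch where the surviving errors are corrected to produce $\rho_f$, with an error measure bounded by $(8/9)^d$; this matches $\rho_{ideal}(\rho_{in},p_{ok})$ up to $\epsilon_2 = (8/9)^d$ in trace distance, as claimed.

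The hardest part is the per-location combinatorial bound together with the error-propagation analysis. One must tightly track how Pauli deviations, in particular $X$-errors on dummy qubits, propagate through the $\mathrm{C}Z$ entanglement and through the correction layer before measurement, and argue that the uniform $8/9$ bound holds for every adversarial pattern, irrespective of the server's adaptive choices. The independence-across-locations claim must itself be justified by showing that the randomised measurement order within each set prevents the server from conditioning its deviation at one location on colour information leaked from measurement outcomes at another.
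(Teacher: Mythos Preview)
The paper does not contain a proof of this theorem. The statement is explicitly labelled ``taken from \cite{KW15}'' and is presented in the preliminaries section as an imported result; the authors rely on it as a black box and never reprove it. There is therefore nothing in the paper to compare your proposal against.

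That said, your sketch is a fair outline of how the argument in \cite{KW15} actually goes: Pauli twirling via the random $\theta_i$ and $r_i$, a per-base-location detection bound over the random trap-colouring (including the propagation of errors through dummies and $\mathrm{C}Z$), and then leveraging independence of colouring across base-locations together with the fault-tolerant distance to get the product bound $(8/9)^d$. If you want to present this as a self-contained proof rather than a plan, the parts that need to be made fully rigorous are exactly the ones you flag: the uniform $8/9$ per-location bound across all Pauli patterns (including error propagation from red/dummy qubits), and the justification that the server's deviation distribution is independent of the colouring despite adaptivity. For the purposes of \emph{this} paper, however, citing \cite{KW15} is all that is required.
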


\subsection{Standard Definitions of Security for Quantum Two-Party Computations}

Following \cite{DNS10}, we have two parties $A$ and $B$ with quantum registers $\mathcal{A}$ and $\mathcal{B}$ and extra register $\mathcal{R}$, where $dim\mathcal{R} = (dim\mathcal{A} + dim \mathcal{B})$. The input is denoted $\rho_{in} \in D(\mathcal{A}\otimes\mathcal{B}\otimes\mathcal{R})$, where $D(\mathcal{A})$ is the set of all possible quantum states in register $\mathcal{A}$. Let then $L(\mathcal{A})$ be the set of linear mappings from $\mathcal{A}$ to itself and let $\phi : L(\mathcal{A}) \rightarrow L(\mathcal{B})$ be a completely positive and trace preserving superoperator, also called \emph{quantum operation}. Finally let $\bbbone_{\mathcal{A}}$ be the totally mixed state and $\mathbf{1}_\mathcal{A}$ be identity operator in register $\mathcal{A}$. We will write $U \cdot \rho$ instead of $U \rho U^{\dagger}$ and we will sometimes denote $[b] := \dyad{b}$. Let $U_f$ be the unitary which when applied to classical inputs (computational basis) $(x, y)$, returns the classical output $f(x, y) = (f_1(x, y), f_2(x, y))$. The ideal output is $\rho_{out} = (U_f \otimes \bbbone) \cdot \rho_{in}$. Given two states $\rho_{0}$ and $\rho_{1}$, the trace norm distance is $\Delta(\rho_{0}, \rho_{1}):= \frac{1}{2}\norm{\rho_{0} - \rho_{1}}$. When $\Delta(\rho_{0}, \rho_{1}) \leq \epsilon$ then any process applied to $\rho_{0}$ behaves the same as it would on $\rho_{1}$ except with probability at most $\epsilon$. A function $\mu$ is \emph{negligible in $n$} if, for every polynomial $p$, for sufficiently large $n$'s it holds that $\mu(n) < \frac{1}{p(n)}$.

All the proofs of security in this paper will be in the \emph{real/ideal simulation paradigm}: when considering a party as corrupted, we will construct a simulator interacting with the ideal functionality such that they are not able to detect that they are not in fact interacting directly with a real world honest party instead.

The standard definition of security means that the simulated and real states are exponentially close and so indistinguishable for the adversary.

\begin{definition}[Privacy]
We say that the $n$-step two party strategy is $\epsilon$-private for $B$ if there exists $\epsilon(n)$ negligible in $n$ such that for all adversaries $\tilde{\mathcal{A}}$ and for all steps $i$ we have:
\begin{equation}
\Delta\qty(v_{i}(\tilde{\mathcal{A}}, \rho_{in}), \Tr_{\mathcal{B}_{i}}\qty(\tilde{\rho}_{i}(\tilde{A}, \rho_{in}))) \leq \epsilon(n)
\end{equation}
where $v_{i}(\tilde{\mathcal{A}}, \rho_{in})$ is the view of the adversary when interacting with the simulator and $\Tr_{\mathcal{B}_{i}}\qty(\tilde{\rho}_{i}(\tilde{A}, \rho_{in}))$ is the view of the adversary in the real protocol.
\end{definition}

For further details on these definitions, see \cite{DNS10}.

Another property which a quantum protocol may satisfy is \emph{verifiability}. This intuitively means that the probability of receiving a corrupted output without aborting is negligible.

\begin{definition}[$\epsilon$-verifiability]
\label{verification}
A protocol is said to be \emph{$\epsilon$-verifiable for party $P_{i}$} if for any (potentially malicious) behaviour of party $P_{j}$ with $j \neq i$, the probability of obtaining a wrong output and not aborting is bounded by $\epsilon$. If the output of the real protocol with malicious party $\tilde{P}_{j}$ is $\tilde{\rho}(\tilde{P}_{j}, \rho_{in})$ then we have that:
\EQ{\label{eq:verification}\Delta(\tilde\rho(\tilde{P}_{j},\rho_{in}),\rho_{ideal}(\rho'_{in}))&\leq&\epsilon}
where 

\AR{\rho_{ideal}(\rho_{in})&:=&p_{ok}(\bbbone_{\H_{P_i}}\otimes \mathcal{C}_{\H_{P_j}})\cdot U_f \cdot(\rho_{in})+(1-p_{ok})([\mathsf{abort}_1])
}
where $\mathcal{C}_{\mathcal{H}_{P_{j}}}$ is the deviation that acts on $P_{j}$'s systems after they receive their outcome (a CP-map, it can be purified by including ancilla), $\rho'_{in} = (\bbbone_{\mathcal{H}_{P_i}} \otimes \mathcal{D}_{\mathcal{H}_{P_{j}}})\rho_{in}$ is an initial state compatible with $P_{i}$'s input, where $\mathcal{D}_{\mathcal{H}_{P_{j}}}$ is a deviation on the input by $P_{j}$.
\end{definition}

\noindent Note that since $\mathcal{C}_{\H_{P_j}}$ is performed at the final step of the protocol, we also have that the global state (before the final step deviation, i.e. at step $n-1$) follows:

\EQ{\label{eq:verification2}\Delta(\tilde\rho^{n-1}(\tilde{P}_{j},\rho_{in}),\rho^{n-1}_{ideal}(\rho'_{in}))&\leq&\epsilon}
where $\rho^{n-1}_{ideal}(\rho_{in}) := p_{ok} U_f \cdot (\rho_{in})+(1-p_{ok})([\mathsf{abort}_1])$

During our protocol we will use bit commitment and 1-out-of-2 oblivious transfer (or OT).

Bit commitment consists of two phases, Commit and Reveal, such that after the Commit the receiver has no information about the value that has been committed (hiding), while during the Reveal the sender cannot reveal a value different from the one committed previously (binding). Both these properties can be either computationally or unconditionally verified depending on the scheme (but not both unconditionally). We suppose that all the commitments used verify the following \emph{strict binding} property.

\begin{definition}[Strict binding, taken from \cite{Unr12}]\label{strict bind}
Let $COM$ be a commitment scheme, a deterministic polynomial-time function taking two arguments: the opening information $a$ and the message $y$. We say $COM$ is strictly binding if for all $a, y, a', y'$ with $(a, y) \neq (a', y')$, we have that $COM(a, y) \neq COM(a', y')$. It means that the sender is committed to the (unique) opening information \emph{and} 
the message that is being committed.
\end{definition}

A 1-out-of-2 oblivious transfer is a two party functionality in which one party ($P_1$ in our case) has two strings $(x_0, x_1)$ and the other ($P_2$) has a bit $b \in \{0, 1\}$. At the end of the protocol $P_2$ recovers $x_b$. $P_1$ should not know which of the strings $P_2$ has chosen while $P_2$ has no information about the string they did not choose $x_{1-b}$.

The following coin-tossing protocol will be needed later: 

\begin{algorithm}[H]
\caption{Coin-Tossing Protocol}\label{coin-toss}
\begin{enumerate}
\item $P_1$ chooses $\alpha_{1} \overset{R}{\in} \{0, 1\}^{\log(s)}$ uniformly at random, commits to it and sends the commitment to $P_2$ (this commitment has to be perfectly hiding).

\item  $P_2$ chooses $\alpha_{2} \overset{R}{\in} \{0, 1\}^{\log(s)}$ uniformly at random and sends it to $P_1$.

\item $P_1$ opens their commitment and reveals $\alpha_1$.

\item They both set $\alpha = \alpha_1 \oplus \alpha_2$, where $\oplus$ corresponds to the bitwise XOR ($\alpha$ is the index of the evaluation graph).

\end{enumerate}
\end{algorithm}

\subsection{Quantum Covert Adversaries}

We now introduce a new adversarial model for quantum protocols, based on the covert adversaries in \cite{AumLin07}. The quantum covert adversaries are also able to deviate arbitrarily from the protocol. The main difference with malicious is that when they cheat they are caught with high probability but not necessarily exponentially close to $1$.

This models real world situations where getting caught might have dire consequences for the parties, eg. financial repercussions. By associating the correct cost to being caught, even if the probability of getting caught is not exponentially close to $1$ the deterrence might be still high enough to make cheating unappealing.

\begin{definition}[Privacy against covert adversaries]
We say that the protocol is $\epsilon$-private for $B$ if for all adversaries $\tilde{\mathcal{A}}$ and for all steps $i$ we have:
\begin{equation}
\Delta\qty(v_{i}(\tilde{\mathcal{A}}, \rho_{in}), \Tr_{\mathcal{B}_{i}}\qty(\tilde{\rho}_{i}(\tilde{A}, \rho_{in}))) \leq \epsilon
\end{equation}
\end{definition}

\noindent Note that we do not have any requirement on $\epsilon$ and so, while we would like it to be close to $0$ it does not necessarily need to be negligible. 

Stronger and more elegant definitions of covert adversaries that might be adaptable to the quantum case can be found in \cite{AumLin07}, but for reasons specific to our construction (namely the fact that measurement is irreversible and disturbs quantum states) they are not directly applicable here.

\paragraph{Relation with specious and malicious adversaries}
In the classical case the covert adversary lies between the honest-but-curious and malicious ones for some choices of $\epsilon$, as shown in \cite{AumLin07}.
Their analysis partially holds also for the quantum case: the quantum covert adversary is strictly less powerful than the fully malicious adversary, unless the $\epsilon$ is negligible, in which case both notions are trivially equivalent.
The situation is not as clear with quantum specious adversaries. We only get that if the protocol is secure against covert adversaries with a certain $\epsilon$ then it is also secure against specious adversaries with the same $\epsilon$, which is not required to be negligible.

\section{The Quantum Cut-and-Choose Technique}\label{quantum cc}

The Cut-and-Choose method is a standard technique to boost a protocol secure against honest-but-curious to being secure against malicious adversaries, by enforcing the honest-but-curious behaviour.
The classical Yao protocol, which also relies on a client/server (or garbler/evaluator) setup, was first proven secure against honest-but-curious adversaries (e.g. in \cite{LP04proof}) and then against malicious adversaries using this technique in \cite{LinPin07}.

The garbler creates $s$ copies of the graph and the evaluator chooses which ones (the \emph{check graphs}) they will check for consistency. 
If the checks pass and additional precautions are taken, the evaluator is confident that with high probability the remaining graphs (the \emph{evaluation graphs}) 
were also constructed correctly and can be used for the computation. Here we will have $s$ graphs in total, $s-1$ \emph{check graphs} and $1$ \emph{evaluation graph}. In the classical Yao protocol, the probability of cheating and not getting caught is made negligible by using $\frac{s}{2}$ check graphs and $\frac{s}{2}$ evaluation graphs and revealing only the majority output of the evaluation graphs (not possible in quantum case, see below). There are several 
caveats in this setting even classically, cf. \cite{LinPin07,Kir08}.

We extend this technique to quantum computations in three steps. First, we show how to 
verify 
quantum states using Quantum-State-Preparation Cut-and-Choose (QSP-CC). This ensures that the resource state for the quantum computation in VBQC is constructed correctly. Secondly, we define \emph{Classical Instructions Cut-and-Choose} (CI-CC), using the classical Cut-and-Choose to verify that the (classical) instructions for the computation are correct. Finally, we combine the two to get \emph{Quantum Computation Cut-and-Choose} (QC-CC).

\subsection{Quantum State Preparation Cut-and-Choose (QSP-CC)}\label{qscc}

The intuition for this functionality is that it allows the receiver to essentially test that a state $\ket{\psi_\alpha}$ was prepared and sent correctly (as promised), up to a certain probability, without the sender revealing the classical description of that state. This procedure boosts the classical commitment scheme towards a quantum-state commitment. Note however, that a proper quantum-state commitment scheme would require that the receiver also obtains no information about the state, which is not true here: in the above scheme the receivers can always obtain some (partial) information by measuring the state.

See Protocol \ref{qsp-cc} (below) for details and Appendix \ref{proof_cc} for proof 
that the state is 
$\frac{1}{\sqrt{s}}$-close to the correctly-prepared. 
	Note that if we used more than $1$ evaluation graphs (as needed for boosting the success probability),
the probability of successful cheating does not scale linearly with parallel repetitions of QSP-CC 
due to coherent (entangled) attacks.

\begin{algorithm}
\caption{Functionality: QSP-CC}
\label{qsp-cc}
\textbf{Set up:} Two parties Alice and Bob.\\
\textbf{Input:} Alice inputs a set of $s$ pure states $\{\ket{\psi_1},\cdots,\ket{\psi_s}\}$ along with their classical descriptions $\psi_i$. Bob chooses one index $\alpha$ at random.\\
\textbf{Output:} For any strategy of an adversarial Alice, there exists a $0 \leq p \leq 1$ such that Bob obtains a state $\frac{1}{\sqrt{s}}$-close to $\rho_{id}(p) := p\dyad{\psi_\alpha} + (1 - p)[\textrm{Abort}]$.\\
\textbf{Protocol:}\\
-- Alice commits the classical values $\psi_i$ with 
quantum-safe classical commitment scheme.\\
-- Alice sends all the $s$ labelled quantum states and then the $s$ commitments to Bob.\\
-- Bob randomly chooses the index $\alpha$ and request 
to open all 
commitments for $i \neq \alpha$.\\
-- Alice reveals all the classical values $\psi_i$ for $i \neq \alpha$.\\
-- Bob measures all states with index $i \neq \alpha$ in the basis $\{P_i := \dyad{\psi_i}, \bar{P}_i := I - \dyad{\psi_i}\}$ and aborts if he obtains the second outcome 
for any measurement. 
If the states $\ket{\psi_i}$ are tensor products of qubits, as is the case for the states sent by the verifier in VBQC, this measurement can be performed with local single qubit measurements.\\
-- The state $\ket{\psi_\alpha}$ 
(not measured) is guaranteed to be $\frac{1}{\sqrt{s}}$-close to $\rho_{id}(p)$ for some $p$.
\end{algorithm}

\subsection{Classical Instructions Cut-and-Choose (CI-CC)}

To perform the VBQC protocol, even if the resource state is correct, one needs to ensure that the classical instructions, i.e. measurement angles for each qubit $\boldsymbol{\delta}(\boldsymbol{\phi},\mathbf{ran})$, are also correct. The subscripts to bold symbols denote different graphs. 

The angles $\boldsymbol{\phi}$ are public, 
the receiver wants to ensure that $(\boldsymbol{\delta}_\alpha,\mathbf{ran}_\alpha)$, for the evaluation graph $\alpha$, is correct (and committed) without learning $\mathbf{ran}_\alpha$. To achieve this, the sender commits to the classical instructions for all graphs $\boldsymbol{\delta}_i$ after sending the (correct) qubits $\ket{\psi_i(\mathbf{ran}_i)}$ to the receiver. When $\mathbf{ran}_i$ is opened, the receiver can deterministically decide if $\boldsymbol{\delta}_i$ is correct (w.r.t. $\boldsymbol{\phi}$). Intuitively we expect that such a classical Cut-and-Choose has a $\frac1s$ probability of failure but it turns out that it is in fact 
$\frac{1}{\sqrt{s}}$, due to the specific proof techniques used to prove security against a quantum adversary and in particular the special rewinding from Appendix \ref{unruh rew}. This will become clear in the proof of the following section.

\begin{algorithm}[H]
\caption{Functionality: CI-CC}
\label{i-cc}
\textbf{Set up:} Two parties Alice and Bob.\\
\textbf{Input:} Alice inputs a set of $s$ pure states $\{\ket{\psi_1(\mathbf{ran}_1)},\cdots,\ket{\psi_s(\mathbf{ran}_s)}\}$ and for each state a set of classical instructions $\{\boldsymbol{\delta}_1,\cdots, \boldsymbol{\delta}_s\}$ 
and the underlying randomness $\{\mathbf{ran}_1,\cdots,\mathbf{ran}_s\}$. Bob chooses one index $\alpha$ at random. 
\\
\textbf{Output:} For any strategy of an adversarial Alice involving only the classical instructions, the probability that Bob receives the wrong instructions $\boldsymbol{\delta}_\alpha$ for $\ket{\psi_\alpha(\mathbf{ran}_\alpha}$ and does not abort is at most $\frac{1}{\sqrt{s}}$.\\
\textbf{Protocol:}\\
-- Alice commits the 
values $\boldsymbol{\delta}_i, \mathbf{ran}_i$ with 
quantum-safe classical commitments.\\ 
-- Alice sends all the $s$ labelled quantum states and then the $s$ commitments to Bob.\\
-- Bob randomly chooses the index $\alpha$ and request 
to open all 
commitments for $i \neq \alpha$.\\
-- Alice reveals all the classical values $\boldsymbol{\delta}_i, \mathbf{ran}_i$ for $i \neq \alpha$.\\
-- Bob 
verifies that all the instructions are computed correctly 
and aborts otherwise. 
\\
-- The remaining set of instructions $\boldsymbol{\delta}_\alpha$ is correct up to probability $\frac{1}{\sqrt{s}}$.
\end{algorithm}

The receiver 
can use the remaining committed instructions $\boldsymbol{\delta}_\alpha$ to drive the computation 
by asking the sender to open the instructions 
corresponding to the measurement outcomes (influence-past). 
This differs from the classical case where the circuit evaluation is non-interactive.

\subsection{Quantum Computation Cut-and-Choose (QC-CC)}\label{qc-cc}

We can now introduce a Cut-and-Choose technique for quantum computation. The sender can deviate in any way. By combining CI-CC with QSP-CC and using the commitments during the (quantum) computation, the evaluator knows (with high probability) that they have been asked to perform the correct quantum computation.

\begin{algorithm}[H]
\caption{Functionality: QC-CC}
\label{qc-cc-prot}
\textbf{Set up:} Two parties Alice and Bob, a (quantum) computation $f$ hidden in the pairs $(\mathbf{ran}_i,\boldsymbol{\delta}_i)$.\\ 
\textbf{Input:} Alice inputs a set of $s$ pure states $\{\ket{\psi_1(\mathbf{ran}_1)},\cdots,\ket{\psi_s(\mathbf{ran}_s)}\}$ and for each state a set of classical instructions $\{\boldsymbol{\delta}_1,\cdots, \boldsymbol{\delta}_s\}$, the underlying randomness $\{\mathbf{ran}_1,\cdots,\mathbf{ran}_s\}$ and the classical description of each state. Bob chooses one index $\alpha$ at random.\\
\textbf{Output:} For any adversarial Alice, the probability of performing the wrong computation and not aborting is $\order{\frac{1}{\sqrt{s}}}$.
\textbf{Protocol:}\\
-- Alice commits the classical values $\psi_i(\boldsymbol{ran}_i)$ and $\boldsymbol{\delta}_i, \mathbf{ran}_i$ as in protocols \ref{qsp-cc} and \ref{i-cc} respectively.\\
-- Alice sends all the $s$ labelled quantum states and then the $s$ commitments to Bob.\\
-- Bob randomly chooses the index $\alpha$ and request 
to open all 
commitments for $i \neq \alpha$.\\
-- Alice reveals all the classical values $\psi_i(\boldsymbol{ran}_i), \boldsymbol{\delta}_i, \mathbf{ran}_i$ for $i \neq \alpha$.\\
-- Bob performs the same verifications as in protocols \ref{qsp-cc} and \ref{i-cc} and aborts similarly.\\
-- Alice reveals a subset of instructions $\boldsymbol{\delta}_\alpha$, as required by the protocol they wish to perform (but keeps secret $\mathbf{ran}_\alpha$).\\
-- Bob uses these along with the state $\ket{\psi_\alpha}$ to perform the desired computation. 
\end{algorithm}

From Protocol \ref{qsp-cc}, the state $\ket{\psi_\alpha(\mathbf{ran}_\alpha)}$ is $\frac1{\sqrt{s}}$-close to the ideal state. From Protocol \ref{i-cc}, the pair $(\boldsymbol{\delta}_\alpha,\mathbf{ran}_\alpha)$ are constructed correctly up to probability $\frac{1}{\sqrt{s}}$. It follows that the computation is performed correctly up to probability $O(\frac1{\sqrt{s}})$.

\begin{proof}
We suppose that we have access to an oracle $O^f$ which, upon being given the secret parameters $\mathbf{ran}_i, \psi_i(\boldsymbol{ran}_i)$ and a subset of the instructions $\boldsymbol{\delta}_i$ which is sufficient to drive a single specific computation on the resource state $\ket{\psi_i(\mathbf{ran}_i)}$, returns the output $f_i$ produced by using these instructions on the (correct) state $\ket{\psi_i(\mathbf{ran}_i)}$.

Let us construct a simulator for an adversarial Alice:

\begin{enumerate}
\item The simulator runs the protocol normally 
until Alice reveals her commitments: receives the qubits and commitments, chooses random index $\alpha$ and receives the openings of commitments for $i \neq \alpha$.

\item Using the special rewinding (Appendix \ref{unruh rew}), it rewinds the simulation and 
chooses at random a second index $\alpha'$,  (for known $\mathbf{ran}_{\alpha'}, \psi_{\alpha'}(\boldsymbol{ran}_{\alpha'})$, $\boldsymbol{\delta}_{\alpha'}$). 

\end{enumerate}

This (classical) part of the protocol can be viewed as three steps: commitment (sending the commitments), challenge (choosing the random $\alpha$) and response (revealing the commitments).

Furthermore it has two properties: \emph{special soundness} and \emph{strict soundness}. Intuitively, special soundness means that given two correct communication transcripts with different challenges, an extractor is able to compute a witness (the simulator recovers the secret values). 
Strict soundness means that given a commitment and challenge, there is a unique acceptable response: here the only accepted response is to decommit the correct committed values (due to perfect and strict binding).

Following the analysis in \cite{Unr12}, protocols possessing such properties are secure against malicious quantum adversaries and can use rewinding against quantum adversaries in their security proof to extract a witness, evading issues naive rewinding faces due to no-cloning. 
This rewinding can only be done once as per the A-style definition of $\Sigma$-protocols in \cite{Unr12}. 
Lemma \ref{q rew state} shows that after this step 
the distance between the real execution 
and the simulation 
is bounded by $\frac{1}{\sqrt{s}}$.

Importantly, the clear separation between classical information and quantum states in protocols based on \cite{fk} is what makes the rewinding possible 
on the classical part of the protocol.

\begin{enumerate}[resume]
\item The simulator performs all the checks for $i \neq \alpha'$ and aborts in the same cases as an honest party would.

\item Alice reveals a subset of $\boldsymbol{\delta}_{\alpha'}$ needed to perform the computation. 

\item The simulator can send this subset, along with the corresponding $\mathbf{ran}_{\alpha'}, \psi_{\alpha'}(\boldsymbol{ran}_{\alpha'})$ which were acquired previously, to the oracle $O^f$ and recover the correct output $f_{\alpha'}$, which is returned to Alice at the end of the computation.

\end{enumerate}

\noindent The distance between the ideal and real execution is bounded by $O(\frac1{\sqrt{s}})$. 
\qed

\end{proof}

In addition to 
the arguments given in QSP-CC against using $s/2$ evaluation graphs with quantum states, 
we give 
an attack against our protocol showing concretely 
why this is not applicable: malicious $P_1$ encodes a teleportation of $P_2$'s input in the trap qubits of one graph.
Since the results of all measurements are given back, $P_1$ can recover $P_2$'s input if the corrupt graph is (one of the) evaluated. This attack succeeds with probability $\frac{\textrm{number of evaluation graphs}}{s}$ and so we cannot hope for a better security bound than an inverse polynomial with this version of the protocol. Classically, evaluating an incorrect circuit has minimal influence when using $s/2$ evaluation circuits where only the majority output is returned. Here guaranteeing that the majority of evaluation circuits is correct would not be enough (the evaluation of a circuit gives extra information for the sake of verifiability).

A final crucial observation is that this proof provides an example where proving security against 
a quantum adversary is hard, even for a classical functionality (CC). 
The part of the protocol that needs rewinding is entirely classical and the same proof (and extra cost) is necessary even for a fully classical CC protocol 
(single evaluation graph). 
The failure probability 
goes from $\frac{1}{s}$ 
to $\frac{1}{\sqrt{s}}$ for quantum adversaries. 
This kind of quadratic gap is unavoidable against quantum adversaries, unless (possibly) one uses totally different proof techniques: 
it is not sufficient to use cryptographic primitives resistant against quantum computers (eg. based on LWE), 
but proof techniques (and security parameters) should also be modified. 

\section{The 2PQC Protocol}\label{prot}

\paragraph{Ideal functionality.}

\begin{itemize}

\item \textbf{Inputs:} Each party has a classical input, $x, y \in \{0, 1\}^{n}$ for $P_1$ and $P_2$ respectively. The adversary has auxiliary input $z \in \{0, 1\}^{*}$. Honest parties send their input to the trusted party computing $f = (f_1, f_2)$, a corrupted party $i$ either sends $\mathsf{abort}_i$ or any input of length $n$ (computed in poly-time from their input and auxiliary input). 

\item \textbf{Computation by the trusted party:} If the trusted party receives an inconsistent input or $\mathsf{abort}_i$ by any party $i$, it returns $\mathsf{abort}_i$ to both parties. Otherwise the trusted party first sends $f_1(x', y')$ to $P_1$ ($x' = x$ if $P_1$ is honest, similarly for $P_2$), who can then choose to abort if corrupted (by sending $\mathsf{abort}_1$, which the trusted party forwards to $P_2$). If not, then $P_2$ receives $f_{2}(x', y')$ from the trusted party.

\item	\textbf{Outputs:} Honest parties output what they received from the trusted party, corrupted parties have no output, the adversary outputs an arbitrary BQP function of their inputs and outputs.
\end{itemize}

We will use the security parameters $s$ (number of graphs for CC) and $n$ (size of the inputs) throughout the paper. $P_1$ and $P_2$ will denote the garbler/client and the evaluator/server respectively.

\paragraph{High-level overview}
$P_1$ and $P_2$ have already chosen a VBQC graph computing $f = (f_1, f_2)$ fault-tolerantly. $P_1$ chooses and commits to the randomness for the $s$ versions of the graph (the angles $\theta$ of the states and the flips $r$ in the measurements) and also to all the corresponding measurements angles according to the flow, the possible input measurement angles for both parties, the decryption keys for $P_2$'s output also according to the flow and the positions of the traps among $P_2$'s output qubits. For every input bit of $P_2$, they perform a 1-out-of-2 OT at the end of which $P_2$ learns the measurement angles for their input qubits for all graphs 
(doing the OTs before sending the qubits is essential for the security proof, it allows the simulator to recover the adversary's input before constructing the graphs). They then perform a QC-CC protocol: the qubits of each graph are the states $\ket{\psi_i(\mathbf{ran}_i)}$, the commitments are $\mathbf{ran}_i, \psi_i(\boldsymbol{ran}_i)$ and $\boldsymbol{\delta}_i$, they choose the evaluation graph with a coin-tossing protocol, $P_1$ reveals the commitments of check circuits and $P_2$ verifies them as well as the states. Then they perform the evaluation with the VUBQC protocol with $P_1$ decommitting to the instructions (measurement angles). At the end they perform a simple key-exchange protocol so that $P_2$ may decrypt their output.
 
\begin{theorem}[Correctness]
If both parties are honest and follow the steps of the protocol then the protocol is correct.
\end{theorem}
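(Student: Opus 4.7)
My plan is to walk through the protocol step by step, arguing that each sub-protocol --- the OTs for $P_2$'s input bits, the commitments, the coin-tossing, the QSP-CC/CI-CC checks, the VBQC evaluation and the final key exchange --- behaves exactly as intended when neither party deviates, and that no spurious abort can occur. Since all commitments are prepared honestly, only the correctness of honest opening matters here; the binding and hiding properties play no role in this proof.

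First I would argue that the OTs deliver to $P_2$ the measurement angles for their input qubits across all $s$ graphs, one angle per bit of $y$, which is guaranteed by the OT specification when both parties follow it. Next, the coin-tossing protocol (Protocol \ref{coin-toss}) produces an index $\alpha \in \{1,\ldots,s\}$ uniformly at random, and $P_1$ honestly opens the commitments for every $i \neq \alpha$. Because $P_1$ generates the $s$ dotted triple-graph resources from a valid trap-colouring and computes each $\boldsymbol{\delta}_i$ deterministically from $(\boldsymbol{\phi},\mathbf{ran}_i)$ via the VBQC angle rule $C$, the $s-1$ classical checks of CI-CC all pass. Similarly, for each check graph $i \neq \alpha$ every qubit has been prepared in exactly the state described by $\psi_i(\mathbf{ran}_i)$, so the projective measurement $\{\dyad{\psi_i(\mathbf{ran}_i)}, I - \dyad{\psi_i(\mathbf{ran}_i)}\}$ performed in QSP-CC (locally, qubit by qubit) yields the ``accept'' outcome with certainty.

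The remaining graph $\alpha$ is then a bona fide VBQC instance: the resource state $\ket{\psi_\alpha(\mathbf{ran}_\alpha)}$ is correctly prepared, $P_1$ decommits to the correct measurement angles along the flow as $P_2$ reports outcomes, and $P_2$ executes the server side of Protocol \ref{prot:KW15} honestly. Invoking Theorem \ref{vbqc verif} in the honest-server case (equivalently, the standard correctness of the underlying VBQC pattern), every trap outcome satisfies $b_i = r_i$ deterministically, so the verification step does not abort, and once $P_1$ applies the known corrections to the green output qubits they encode $f(x,y)=(f_1(x,y),f_2(x,y))$, fault-tolerantly. $P_1$ reads off $f_1(x,y)$ locally, and $P_2$ recovers $f_2(x,y)$ as soon as $P_1$ sends the one-time-pad keys in the final key-exchange step, which also respects the output ordering prescribed by the ideal functionality.

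There is no single difficult step; the only point worth checking carefully is that an honest run never produces a false abort. This reduces to two deterministic observations: $\boldsymbol{\delta}_i$ is a fixed function of the committed randomness and the public $\boldsymbol{\phi}$, so CI-CC cannot reject an honest opening; and each transmitted qubit is exactly the state advertised, so the QSP-CC projective tests and the trap measurements in the evaluation phase all fire with probability one. Hence the protocol completes without aborting and returns the correct outputs, up to the intrinsic (negligible) failure probability of the fault-tolerant encoding sitting inside VBQC.
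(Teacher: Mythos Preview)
Your proposal is correct and follows essentially the same approach as the paper: both argue that with honest parties all graphs, commitments and qubits are correctly prepared, so every Cut-and-Choose check and trap passes without abort, and the evaluation on graph $\alpha$ reduces to an honest run of VBQC whose correctness yields $f(x,y)$, with the final key-release step letting $P_2$ decrypt $f_2(x,y)$. Your write-up is simply more detailed (and the closing caveat about fault-tolerance failure is unnecessary in the noiseless honest setting), but the logical structure is the same.
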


\begin{proof}
If the parties are honest, all the graphs and commitments are correct. The protocol (restricted to the evaluation graph) is equivalent to the normal VBQC, with the evaluator ($P_2$) keeping part of the output.  The last step of the protocol allows $P_2$ 
to decrypt this output just the same way as the $P_1$ would in a regular execution. Moreover, all the checks pass and there is no abort. The correctness directly follows from the correctness of the VBQC protocol.\qed
\end{proof}

\begin{theorem}\label{thm:qyao_privacy}
Assume that the oblivious transfer protocol is $\epsilon_{2}$-private against malicious adversaries and that the commitments are perfectly hiding and binding. Let $c$ be the maximum degree of the graph, $\delta$ the number of errors tolerated by the fault-tolerant encoding and $s$ the number of graphs constructed as part of the CC. If the protocol is $\epsilon_{2}$-verifiable for $P_1$, then it is $\epsilon_{2}$-private against a malicious $P_2$ and $\epsilon_{1}$-private against a covert $P_1$, where $\epsilon_{2} = \qty(\frac{8}{9})^{d}$ for $d = \ceil*{\frac{\delta}{2(2c+1)}}$ and $\epsilon_{1} = \frac{1}{\sqrt{s}}$
\end{theorem}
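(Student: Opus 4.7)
The proof splits into two independent simulation arguments, one per adversarial party, and builds a separate simulator for each. In both cases we follow the real/ideal paradigm, using the blindness inherited from UBQC/VBQC, the hiding and binding properties of the commitments, and the $\epsilon_2$-privacy of the OT, together with the two rewinding techniques mentioned in the introduction. The verifiability assumption for $P_1$ (i.e.\ that Protocol \ref{prot:KW15} is $\epsilon_2$-verifiable) will be invoked to reduce privacy against malicious $P_2$ to the ideal functionality, and the QC-CC analysis in Section \ref{qc-cc} will be used almost black-box to handle covert $P_1$.

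\textbf{Case 1: malicious $P_2$.} The simulator $\mathcal{S}_2$ plays the role of $P_1$ towards $\tilde{P}_2$. First I would use the OT simulator (which exists by the assumed $\epsilon_2$-privacy of OT against malicious receivers) to extract $\tilde{P}_2$'s effective input $y^{*}$ bit by bit, before any qubits are transmitted; this is why the OTs are performed first in the protocol. Invoke the trusted functionality on $y^{*}$ (the simulator has no $P_1$-input on its side) to obtain $f_{2}(x,y^{*})$. For the quantum phase, $\mathcal{S}_2$ plays honestly with $P_1$-input equal to $0^{n}$: by UBQC-blindness (which follows from the random $\theta_i$ and $r_i$ one-time-pads) the joint distribution of qubits and classical messages in this honest-but-wrong-input simulation is identical to the real one until the final key-exchange step. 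At the key-exchange step, the simulator adjusts the decryption keys so that $\tilde{P}_2$'s retrieved output equals the value $f_{2}(x,y^{*})$ obtained from the ideal functionality. The $\epsilon_2$-verifiability of VBQC (Theorem \ref{vbqc verif}) is used here to guarantee that an honest $P_1$ would, up to trace distance $\epsilon_2$, either hand $\tilde{P}_2$ an abort or a state consistent with the correct output, so that the key-adjustment step produces a view $\epsilon_2$-close to the real view. Any residual rewinding inside the classical sub-protocols (e.g.\ for the coin toss of Protocol \ref{coin-toss}) can be handled by Watrous' oblivious rewinding, which applies because the relevant predicates there have outcome probabilities essentially independent of $\tilde{P}_2$'s auxiliary input.

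\textbf{Case 2: covert $P_1$.} The simulator $\mathcal{S}_1$ plays $P_2$ against $\tilde{P}_1$. The strategy is to lift the QC-CC simulator from the proof in Section \ref{qc-cc} into the whole 2PQC protocol: proceed honestly up to the moment where $\tilde{P}_1$ commits to the $s$ copies of $(\psi_i,\boldsymbol{\delta}_i,\mathbf{ran}_i)$ and the coin-tossing output $\alpha$ is revealed; then invoke Unruh's special rewinding (Appendix~\ref{unruh rew}) on the classical commit-challenge-response structure, which by strict soundness (Definition \ref{strict bind}) and special soundness lets the simulator extract, for a second independently chosen index $\alpha'$, the full secret parameters $(\mathbf{ran}_{\alpha'},\psi_{\alpha'},\boldsymbol{\delta}_{\alpha'})$. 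From these extracted parameters $\mathcal{S}_1$ reconstructs $\tilde{P}_1$'s effective classical input $x^{*}$ (reading it off the OT inputs and the committed angles), forwards $x^{*}$ to the trusted functionality, and then completes the protocol using the extracted values so that $P_2$ obtains the correct $f_{2}$. By Lemma \ref{q rew state} (as invoked in the QC-CC proof) the simulated view is $1/\sqrt{s}$-close to the real one, which gives the claimed $\epsilon_1 = 1/\sqrt{s}$; the covert relaxation is exactly what lets us tolerate this non-negligible slack, and the attack described after Protocol \ref{qc-cc-prot} shows we cannot hope to do better with a single evaluation graph.

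\textbf{Main obstacle.} The delicate point will be the interplay between the two rewindings and the quantum state held by $\tilde{P}_2$ (resp.\ the joint state shared with $\tilde{P}_1$) when the simulator rewinds. Unruh's rewinding is permitted at most once, and only because the rewound transcript is entirely classical in our setup; justifying this rigorously requires appealing to the fact that, by the structure of \cite{fk,KW15} based protocols, the quantum register is transmitted before the commit-challenge-response round and is not touched during it, so the no-cloning obstruction to classical rewinding does not apply. For Case 1 the subtlety is showing that the final key-exchange step can be programmed to match the ideal output without violating the statistical closeness, which is where verifiability (rather than mere blindness) is essential: blindness alone would not bound the trace distance between the real state $\tilde{P}_2$ holds at step $n-1$ and the ideal one, but Theorem \ref{vbqc verif} via equation \eqref{eq:verification2} does, yielding the final $\epsilon_2$ bound.
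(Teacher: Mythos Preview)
Your treatment of Case 2 (covert $P_1$) is essentially the paper's approach: run with a dummy input, use Unruh's special rewinding on the commit--challenge--response structure to extract a second set of openings, recover $\tilde P_1$'s effective input from the committed angles, and feed the ideal functionality. The $1/\sqrt{s}$ bound comes from Lemma~\ref{q rew state} exactly as you say.

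Case 1 (malicious $P_2$), however, has a genuine gap. Your plan is to run honestly with $P_1$-input $0^n$ and then ``adjust the decryption keys'' at the key-exchange step so that $\tilde P_2$ decrypts to $f_2(x,y^*)$. This cannot work as stated, because the decryption keys for $P_2$'s output are \emph{committed} in step~2(c) and those commitments are sent in step~4(b), long before the key-exchange step; binding prevents any later adjustment. Worse, in step~4(e) the honest $\tilde P_2$ checks, for every \emph{check} graph, that the committed keys are consistent with the committed $(\theta,r,\phi)$ and with the flow. So if you tamper with the keys on all $s$ graphs, every check graph fails; if you tamper only on one graph, you must guarantee that this one graph is the evaluation graph, which you cannot do without biasing the coin toss. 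Your mention of Watrous rewinding as ``residual'' for the coin toss misses that this is precisely the central device for this case.

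The paper resolves this with two ingredients you are missing. First, it proves a separate lemma (Lemma~\ref{fake graph}) showing how to build a single fake dotted-triple graph in which $P_2$'s output qubits are \emph{physically disconnected} from the rest of the graph via extra dummies, so that those qubits measure to fixed bits $b_q$ deterministically, independent of any deviation by $\tilde P_2$; the committed keys are then set to $b_q\oplus f_2(x,\hat y)_q$, and blindness makes this fake graph indistinguishable from a genuine one. Second, the simulator places this fake graph at a random index $\hat\alpha$, builds the other $s-1$ graphs honestly (so all checks pass), and uses Watrous' oblivious rewinding on the coin toss to force $\alpha=\hat\alpha$; the success probability $1/s$ is independent of $\tilde P_2$'s state, so Watrous' lemma applies. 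Only after the fake evaluation graph is fixed does verifiability (equation~\eqref{eq:verification2}) close the argument with the $\epsilon_2$ bound. Your ``honest-with-$0^n$'' route does not provide a substitute for the deterministic-output construction, and without it the simulator cannot commit to keys that will decrypt correctly against an arbitrary malicious evaluator.
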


\noindent\emph{Proof Sketch.} The proof follows 
from Lemma \ref{sec P1} ($\epsilon_1$-privacy against a covert $P_1$) and Lemma \ref{sec P2} ($\epsilon_2$-privacy against a malicious $P_2$) (see detailed proof in Appendix \ref{sec proofs}).

The simulator for adversarial $P_1$ (Lemma \ref{sec P1}) is very similar to the one in the proof for Protocol \ref{qc-cc}: obtains 
one set of values form a first run (runs as usual until $P_1$ reveals the commitments) then rewinds the adversary to get a second set and recovers the secret parameters of the adversary which then sends to the ideal functionality, thus getting the ideal output. The simulator runs the evaluation graph with a random input, encrypts the ideal output and returns it to the adversary. 

The simulator for adversarial $P_2$ (Lemma \ref{sec P2}) relies on the construction of a graph which has deterministic output (see Lemma \ref{fake graph}). The simulator recovers the adversary's input with the OTs and sends it to the ideal functionality, from which $P_2$'s ideal output is obtained. Then constructs a graph which always produces this output and hides it among the remaining $s-1$ graphs, which are constructed correctly. The simulator biases the choice of evaluation graph with rewinding the coin-toss so that this special graph is chosen. The checks pass and $P_2$ evaluates the fake graph and gets the correct output. 
\qed

\begin{algorithm}[H]
\caption{Secure 2PQC - QYao Cut-and-Choose protocol}
\label{qyao cc protool}
\textbf{Input:} $P_1$ has input $x \in \{0, 1\}^{n}$ and $P_2$ has input $y \in \{0, 1\}^{n}$.
\\
\textbf{Auxiliary input:} Functions $f_1$ and $f_2$, a security parameter $s$ (which is a power of $2$) and the description of a fault-tolerant MBQC pattern $C$ such that $C(x, y) = (f_1(x, y), f_2(x, y))$ for classical inputs $(x, y)$.
\\
\textbf{Output:} Party $P_1$ should receive $f_1(x, y)$ and party $P_2$ should receive $f_2(x, y)$.
\\
\textbf{The protocol:}

\begin{enumerate}
\item $P_1$ constructs $s$ copies of the dotted-triple-graph computing $C$ using independent randomness (here they just choose everything in the different graphs, no need to prepare the qubits just yet).

\item $P_1$ constructs commitments to:
\begin{enumerate}
\item All $\theta_{i,q}^{j}$, $r_{i,q}^{j}$ (these correspond to the $\mathbf{ran}_i$, while the $\theta_{i,q}^{j}$ alone correspond to $\ket{\psi_i(\mathbf{ran}_i}$in the QC-CC protocol, all the other commitments correspond to $\boldsymbol{\delta}_i$), $\prescript{\textsc{k}}{}\delta_{i,q}^{j}$ where $i$ runs over all $s$ graphs, $q$ is the index of the base location in the graph, j correspond to the index of computation, trap and dummy qubits for this particular base location (in the order they appear in the triple-dotted-graph $i$, so if the first qubit for this base location is a trap, $\theta_{i,q}^{1}$ will be the angle associated with that trap) for this base location $q$ and $\textsc{k}$ runs over all possible correction values (previous measurements) affecting this position, according to the flow (the position of the traps and dummies in all graphs are given implicitly through these values). They also commit to the values of the dummy qubits.

\item For their and $P_2$'s input they commit to both possible versions of $\prescript{}{b}\delta_{i,q}^{j}$ where $b \in \{0,1\}$, in permuted order for their input and in correct order for $P_2$'s input.

\item They commit to all potential keys (a One-Time-Pad) for each of $P_2$'s output qubits (according to flow).

\item They commit to the positions of the computation qubits, dummies and traps in the last layer of computation for $P_2$'s output qubits.

\end{enumerate}

\item $P_1$ and $P_2$ participate in $n$ instances of a 1-out-of-2 OT protocol, where in each one $P_1$'s inputs are the sets of $\qty{decommit(\prescript{}{0}\delta_{i,q}^{j})}_{0 \leq i \leq s-1}$ and $\qty{decommit(\prescript{}{1}\delta_{i,q}^{j})}_{0 \leq i \leq s-1}$ and $P_2$'s input is 
bit $b_{q}$ corresponding to position $q$. In the end of step 3, $P_2$ receives the decommitments to the measurement angles corresponding to all their inputs (
for the same binary value across all graphs).

\item $P_1$ and $P_2$ perform the QC-CC protocol (\ref{qc-cc-prot}): 
\begin{enumerate}
\item $P_1$ sends the qubits in the states $\ket*{+_{\theta_{i,q}^{j}}}$
\item Then $P_1$ sends the commitments
\item They both pick the random graph index $\alpha$ using the coin-tossing Protocol \ref{coin-toss}
\item $P_1$ opens the commitments from 2.(a), 2.(b), 2.(c), 2.(d) for any graph whose index is not $\alpha$
\item $P_2$ performs checks and outputs $\mathsf{abort}_2$ and halts if any of the checks fail (the checks are the following : the $\delta$s are correctly constructed and are compatible with the choice of $\phi$, $r$ and $\theta$; the traps and dummies are in the correct place (from commitments 2.(a) and 2.(d)); the decryption keys are correct; the values they received for their input via the OTs are consistent (they are in the correct place with regard to their input bit); they verify that all $\ket*{+_{\theta_{i,q}^{j}}}$ are correct by measuring them in the $\theta_{i,q}^{j}$ basis). 
\item Then $P_1$ opens the values from commitments to the $\delta$s in 2.(b) corresponding to their actual binary input for graph $\alpha$. $P_2$ entangles the qubits according to the dotted-triple-graph and evaluates this graph by asking $P_1$ to open the values to $\prescript{\textsc{k}}{}\delta_{i,q}^{j}$ in 2.(a) for \textsc{k} corresponding to the measurements values they obtained on each qubit. If any of the traps are measured incorrectly $P_1$ privately raises a flag \emph{corrupted}. This corresponds to the evaluation phase of Protocol \ref{prot:KW15}.
\end{enumerate}
\item At the end of the computation they perform the following key-release step:

\begin{enumerate}

\item $P_2$ measures all the qubits in the final layer (output qubits) according to the corresponding $\prescript{\textsc{k}}{}\delta_{i,q}^{j}$, as decommitted by $P_1$.

\item They send back all the measurement outcomes corresponding to $P_1$'s output qubits and commit to the ones corresponding to their output qubits.

\item $P_1$ checks all the traps on their qubits and outputs $\mathsf{abort}_1$ if any fail or if the flag \emph{corrupted} was raised during computation, otherwise they decrypt their output using their decryption keys and set the decryption as their output.
(Continues on next page)

\item $P_1$ reveals the positions of traps and dummies in the final layer of the computation by decommitting 2.(d)

\item For these positions $P_2$ reveals the commitments of 5.(b)

\item[(f)] $P_1$ checks that these traps were measured correctly and outputs $\mathsf{abort}_1$ if any fail, otherwise they decommit the decryption keys in 2.(c) corresponding to $P_2$'s output.

\item[(g)] $P_2$ decrypts their output, sets the decryption as their final output and ends the protocol.
\end{enumerate}
\end{enumerate}
\end{algorithm}

\section*{Acknowledgments}
 Funding
from EPSRC grants EP/N003829/1 and EP/M013243/1 is acknowledged.

\appendix

\section{Mapping of a universal set of gate in the MBQC framework}\label{example mbqc}

We present here diagrams taken from \cite{bfk} showing how to translate a universal set of gates to the MBQC model using the brickwork graphs.

\begin{figure}[h]
\includegraphics[width=0.95\columnwidth]{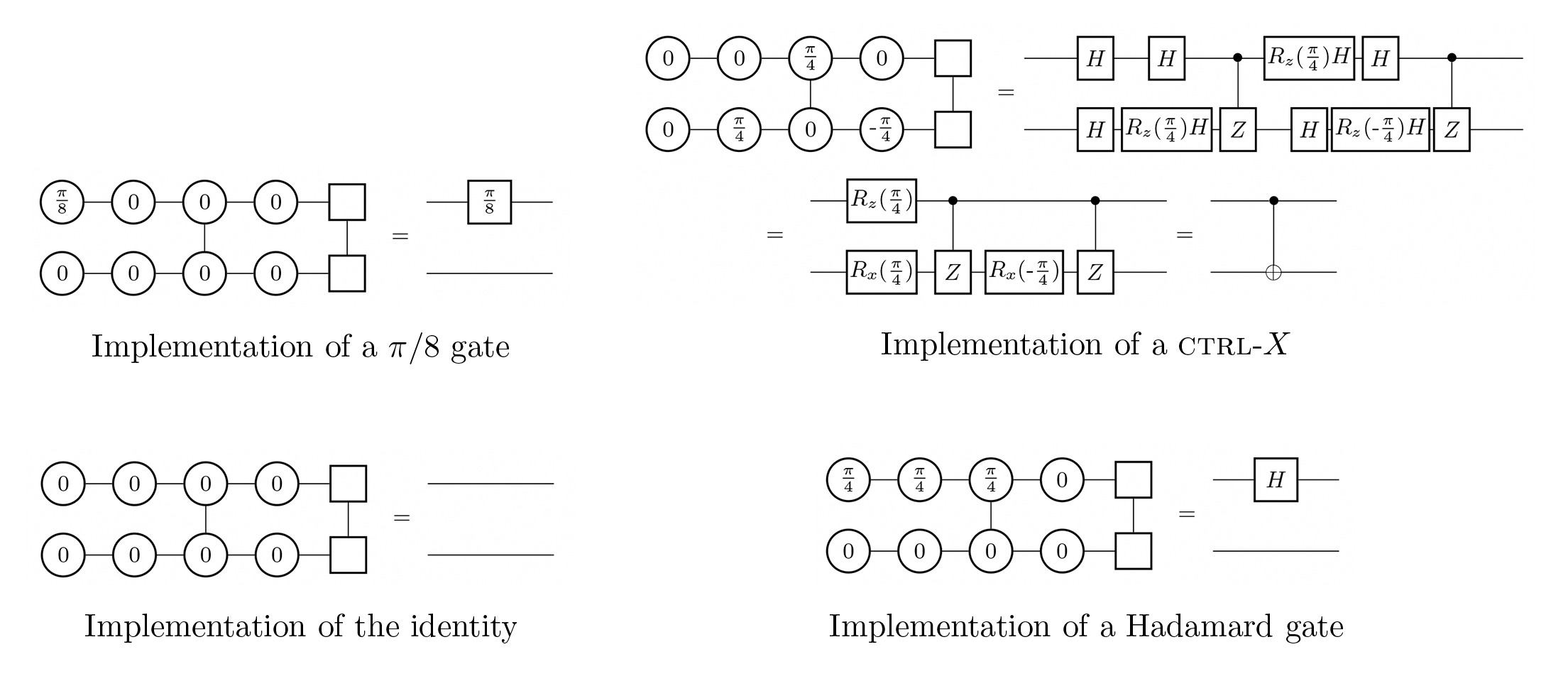}

\caption{Translation of H, $\pi/8$, CNOT and identity gates}
\label{mbqc}
\end{figure}

\section{Proof of QSP-CC}\label{proof_cc}

We have two CP-maps, one corresponding to the real protocol and one to the simulation, and need to show that these maps are $\epsilon$-close. 

We define $P_0 := I \otimes_{i=1}^N [\psi_i]$ and $P_j := I \underset{i\neq j}{\bigotimes} [\psi_i] \otimes I_{j}$ for all $1 \leq j \leq s$. It is clear that $P_j P_k = P_0$ if $j \neq k$ and $\comm{P_j}{P_k} = 0$. Moreover we define $P_j = P_0 + P_j^c$ and can easily see that $P_j^c P_k^c = \delta_{j, k}P_j^c$. Also we define $p_2 = \Tr(P_0\rho)$ and $p_1 = \frac{1}{s}\underset{i}{\sum}\Tr(P_i\rho) \leq p_2 + \frac{1}{s}(1 - p_2)$. We have the following two CPTP maps:

\[
\Phi_1(\rho)=\frac{1}{s}\overset{s}{\underset{i = 1}{\sum}} P_i\rho P_i 
+ (1 - p_1)[\textrm{Abort}]\]
\[\Phi_2(\rho)=P_0\rho P_0 
+ (1 - p_2)[\textrm{Abort}]\]

\begin{proof}
The real protocol corresponds to Bob acting on some state $\rho$ sent by Alice, by applying the CPTP map $\Phi_1(\cdot)$, i.e. measuring if the states sent are correct by choosing randomly one state to be left unmeasured. We will show that it is $\epsilon$-close to $\Phi_2(\cdot)$ map resulting to $\rho_{id}(p)$. 
Let a purification of $\rho$ be $\ket{\psi}$, we need to show:

\[\Delta(P_0\dyad{\psi}P_0, \frac{1}{s}\underset{i}{\sum}P_i\dyad{\psi}\bra{\psi}P_i) \leq \epsilon\]

\noindent We use the sub-normalised fidelity and its relation with the trace distance. We will use the following properties and definitions from \cite{DFPR13}:

$\tilde{\Delta}\qty(\rho, \sigma) \leq \sqrt{1-\tilde{F}^2(\rho, \sigma)}$

$\tilde{F}(\rho, \sigma) = F(\rho, \sigma) + \sqrt{(1 - \Tr\rho)(1 - \Tr\sigma)}$

$F^2(\ket{\phi}, \sigma) = \expval{\sigma}{\phi}$

\noindent Let $\sigma_1 = \frac{1}{s}\underset{i}{\sum}P_i\dyad{\psi}P_i$, $\sigma_2 = P_0\dyad{\psi}P_0$, $p_2 = \Tr\sigma_2$ and $p_1 = \Tr\sigma_1 \leq p_2 + \frac{1}{s}(1 - p_2)$. It is straightforward to see that $F(\sigma_1, \sigma_2) = p_2$. We obtain: 

\[\tilde{F}(\sigma_1, \sigma_2) \geq p_2 + \sqrt{(1 - p_2)(1 - p_2 - \frac{1}{s}(1 - p_2))} \approx p_2 + (1-p_2)(1 - \frac{1}{2s}) \geq 1 - \frac{1}{2s}\]

\noindent assuming $s \gg 1$. It follows that:

\[\tilde{\Delta}(\sigma_1, \sigma_2) \leq \sqrt{1 - (1 - \frac{1}{2s})^2} \approx \sqrt{\frac{1}{s}}\]

\noindent We have $\Phi_1(\rho) = \sigma_1 + (1 - p_1)[\textrm{Abort}]$ and $\Phi_2(\rho) = \sigma_2 + (1 - p_2)[\textrm{Abort}]$ and using the above expressions we get:

\[\Delta(\Phi_1(\rho), \Phi_2(\rho)) \leq \sqrt{\frac{1}{s}}\]

\noindent By tracing out all but the unmeasured system $\alpha$, Bob has (in the simulated case) a state $\frac{1}{\sqrt{s}}$-close to $\rho_{id}(p_2) = p_2[\psi_c] + (1 - p_2)[\textrm{Abort}]$. 
\qed
\end{proof}

\section{Quantum Rewinding}

Classically 
the simulator runs the adversary (chooses its input $x, z, r$) internally and rewinds it by having black box access to the \emph{next message} function  $m_{i+1} = V(x,z, r, m_1, \ldots, m_i)$ where $m_1, \ldots, m_i$ are previous messages. The simulator has to save all messages so that it can send them again later, which is impossible in the quantum setting (due to no-cloning). We present two techniques given in \cite{Wat09} and \cite{Unr12} which achieve a similar result, with different constraints, show that they are applicable for the simulators of our protocol and calculate the success probability for both cases. 

\subsection{Watrous' Oblivious Quantum Rewinding}\label{wat rew}

Let $Q$ be a unitary acting on the pure state $\ket{\psi}\ket{0^{k}}$. We first apply $Q$ to $\ket{\psi}\ket{0^{k}}$ and then measure the first qubit in the computational basis. Let $p(\psi) \in (0, 1)$ be the probability that this measurement outcome is $0$. Then there are unique unit vectors $\ket{\phi_{0}(\psi)}$ and $\ket{\phi_{1}(\psi)}$ such that:
\begin{equation}\label{Eq:Watrous}
Q\ket{\psi}\ket{0^{k}} = \sqrt{p(\psi)}\ket{0}\ket{\phi_{0}(\psi)} + \sqrt{1 - p(\psi)}\ket{1}\ket{\phi_{1}(\psi)}
\end{equation}
Lemma 8 from \cite{Wat09} gives a procedure constructing from $Q$ outputting a state close to $\ket{\phi_{0}(\psi)}$ for any $\ket{\psi}$.

Q represents an attempt at simulating for some cheating adversary, $\ket{\psi}$ is the internal state of this adversary. Getting $0$ means the simulation was successful and getting $1$ corresponds to failure and necessity to rewind. Lemma 8 from \cite{Wat09} states that this is possible if $p(\psi)$ is non-negligible and independent of $\psi$. Rewinding gives a state $\epsilon$-close to that of a successful simulation for any exponentially small $\epsilon$ with polynomially many rewinds.

In our case the rewinding takes place during the coin-tossing part of simulation for $P_2$. Let $\hat{\alpha}$ be the random evaluation graph index chosen by the simulator at the beginning of the proof. After the coin-tossing phase of the protocol, before verifying if the simulation has succeeded or not, the state of the system is in product form (i.e. the random choices of $\alpha, \hat{\alpha}$, are depicted as an equal superposition, but are totally uncorrelated):

\[\ket{\Phi_f}:=\qty(\overset{s}{\underset{\alpha = 1}{\sum}}c_{\alpha}\ket{\phi(\psi, \alpha)}\ket{\alpha})\qty(\overset{s}{\underset{\hat{\alpha} = 1}{\sum}}\frac1{\sqrt{s}}\ket{\hat{\alpha}})\]

\noindent where $\underset{\alpha}{\sum} \abs{c_\alpha}^2 = 1$ are coefficients, $\ket{\phi(\psi, \alpha)}$ is the (normalised) state at the end of the protocol given initial state $\psi$ and choice of graph $\alpha$, while the last part ($\ket{\hat{\alpha}}$) corresponds to the random choice of graph made by the simulator. The projection to the subspace that does not need rewinding (where $\alpha = \hat{\alpha}$) is given: $P_0 := \overset{s}{\underset{\alpha' = 1}{\sum}} I \otimes \dyad{\alpha'} \otimes \dyad{\alpha'}$. We have:

\[\ket{\Phi_f} = P_0\ket{\Phi_f} + (I - P_0) \ket{\Phi_f}\]

\noindent To bring it in the form of Eq.(\ref{Eq:Watrous}) we 
rewrite it. We define the (normalised) states:

$\ket{\phi_0(\psi)}=\overset{s}{\underset{\alpha = 1}{\sum}} c_{\alpha}\ket{\phi(\psi, \alpha)}\ket{\alpha}\ket{\alpha};\quad 
\ket{\phi_1(\psi)}=\overset{s}{\underset{\alpha = 1}{\sum}} \underset{\alpha \neq \hat{\alpha}}{\sum} \frac{c_{\alpha}}{\sqrt{s - 1}}\ket{\phi(\psi, \alpha)} \ket{\alpha} \ket{\hat{\alpha}}$

\noindent Then we have:

\[\ket{\Phi_f} = \sqrt{\frac{1}{s}}\ket{\phi_0(\psi)} + \sqrt{1 - \frac{1}{s}}\ket{\phi_1(\psi)}\]

\noindent Now, following the unitary action that led to $\ket{\Phi_f}$, we perform the measurement $\{P_0, I - P_0\}$ and store the outcome in the value of an extra qubit (the first one):

\[\ket{\Phi_f} = \sqrt{\frac{1}{s}}\ket{0}\ket{\phi_0(\psi)} + \sqrt{1 - \frac{1}{s}}\ket{1}\ket{\phi_1(\psi)}\]

\noindent Finally, we note that this is exactly in the form of Eq. (\ref{Eq:Watrous}) where $p(\psi) = \frac{1}{s}$ is constant (independent of $\psi$). Further details in \cite{Wat09}, proof of Lemma 8.

\subsection{Unruh's Special Quantum Rewinding} 
\label{unruh rew}

Watrous' lemma only ensures that the simulation is successful, but \emph{no information} is kept between two rewinds (hence \emph{oblivious rewinding}).
In the simulator 
for covert $P_1$ we need two transcripts in order to recover their input (which is otherwise secret), so another type of rewinding is necessary. We present the conditions under which Unruh's rewinding method \cite{Unr12} applies 
and then 
show that these conditions are met in our protocol and calculate an upper bound on the distance between the actual and simulated (rewinding to extract input) runs. 

Let $\Pi$ be a protocol between $P_1$, with input $(x, w)$, and $P_2$, with input $x$ and output in $\{0, 1\}$, with three messages: commitment $com$ by $P_1$, challenge $ch$ sampled (efficiently) 
uniformly at random by $P_2$ from the set $C_x$ (membership in $C_x$ has to be 
easy to decide), and response $resp$ by $P_1$. $P_2$ accepts (outputs $0$) by a deterministic poly-time computation on $(x, com, ch, resp)$ (it is called an \emph{accepting conversation} for $x$).

\begin{definition}[Special soundness]
Such a protocol has \emph{special soundness} if there is a deterministic poly-time algorithm $\mathsf{K}_{0}$ (the \emph{special extractor}) such that for any two accepting conversations $(com, ch, resp)$ and $(com, ch', resp')$ for $x$ with $ch \neq ch'$, we have that $w := \mathsf{K}_{0}(x, com, ch, resp, ch', resp')$.
\end{definition}

\begin{definition}[Strict soundness]
Such a protocol has \emph{strict soundness} if for any two accepting conversations $(com, ch, resp);(com, ch, resp')$ for $x$, 
$resp = resp'$.
\end{definition}

\noindent\emph{Canonical extractor (\cite{Unr12})}
The extractor runs the first step of the adversary 
to recover $com$ (each step is a separate unitary operation), chooses two values $ch, ch' \in C_{x}$, runs the second step with $ch$ to get $resp$, applies the inverse of the second step (this is the rewinding) and reruns the second step with $ch'$ to get $resp'$ before applying $\mathsf{K}_{0}$. Since each response is uniquely determined by the commitment and the challenge, if the measured response of the adversary is correct then they must have sent a state close to the real response, therefore this does not disturb too much the internal state of the adversary. 

In our case, $com$ corresponds to step 5 (sending commitments), $ch$ is step 6 (result of coin-toss) and $resp$ is step 7 (revealing commitments). We calculate the distance between the internal state of the adversary in the real protocol where only the commitments for the check graphs are opened and the one after the rewinding in the simulation where all commitments have been opened.

We consider the opening of the commitments as a measurement performed on a state shared with the adversary. Let $P_i$ denote the projector corresponding to one run of this part of the protocol where the challenge is $i$ (analogous to the $P_{ch}^{*}$ in 
\cite{Unr12}). This corresponds to revealing all commitments but one, corresponding to the graph of index $i$. We define $\bar{P}_i = \bbbone 
- P_i$. There is a ``correct'' subspace $P$, that projects to the subspace that answer all $s$ tests, i.e. $PP_i := P$ for all $i$.

The strict soundness property means that there is a unique classical response to each challenge. This intuitively means that two projection $P_i$ and $P_j$ acting on the same subsystem, are either identity or identical which essentially means that they commute, i.e. $P_i P_j = P_j P_i$. On the other hand, the property of special soundness means that any two of these tests, when both successful, allow the simulator to recover a “witness”. This means that $P_i P_j = P$ for all $i \neq j$.

\begin{lemma}\label{q rew state}
Let $\ket{\psi}$ be any state and let $\qty{P_i}_{0 \leq i \leq s}$ and $P$ be as defined above. For $\epsilon = \frac{1}{s}$ we have that:

\[\max_{\ket{\psi}}\Delta\qty(P\ket{\psi},\frac{1}{s}\underset{i}{\sum}P_i\dyad{\psi}P_i) \leq \sqrt{\epsilon}\]
\end{lemma}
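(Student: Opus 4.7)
The plan is to extract the structural consequences of strict and special soundness on the projectors $P_i$, reduce to a convenient orthonormal basis, and then run the subnormalised fidelity argument from Appendix \ref{proof_cc}.

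\emph{Step 1 (algebraic decomposition).} Strict soundness gives $\comm{P_i}{P_j} = 0$, and together with special soundness ($P_iP_j = P$ for $i\neq j$) and the given $PP_i = P$, this forces $P_i^c := P_i - P$ to be a projector, orthogonal to $P$, and mutually orthogonal across $i$. All three facts are routine computations: $(P_i^c)^2 = P_i - 2P + P = P_i^c$, $P\,P_i^c = P_iP - P = 0$, and $P_i^c P_j^c = P_iP_j - P_iP - PP_j + P = 0$ for $i\neq j$. In particular $P + \sum_i P_i^c \leq I$, so writing $P\ket{\psi} = \sqrt{p}\ket{\phi}$ and $P_i^c\ket{\psi} = \sqrt{q_i}\ket{\xi_i}$ with $\{\ket{\phi},\ket{\xi_1},\ldots,\ket{\xi_s}\}$ orthonormal gives the single global constraint $\sum_i q_i \leq 1-p$ and the decomposition $P_i\ket{\psi} = \sqrt{p}\ket{\phi} + \sqrt{q_i}\ket{\xi_i}$.

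\emph{Step 2 (fidelity computation).} Set $\sigma_2 := P\dyad{\psi}P = p\dyad{\phi}$ and $\sigma_1 := \tfrac1s\sum_i P_i\dyad{\psi}P_i$. Since $\sigma_2$ is rank one and the $\ket{\xi_i}$ are orthogonal to $\ket{\phi}$, $\bra{\phi}\sigma_1\ket{\phi} = p$, hence $F(\sigma_1,\sigma_2) = p$; also $\Tr\sigma_2 = p$ and $\Tr\sigma_1 = p + \tfrac1s\sum_i q_i \leq p + (1-p)/s$. Substituting into the subnormalised fidelity of \cite{DFPR13} used in Appendix \ref{proof_cc}, and lower-bounding the second factor by $(1-p)(1 - 1/s)$, yields
\[
\tilde F(\sigma_1,\sigma_2) \;\geq\; p + (1-p)\sqrt{1 - 1/s}.
\]

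\emph{Step 3 (distance bound).} Applying $\tilde\Delta \leq \sqrt{1 - \tilde F^2}$ and expanding, an exact rearrangement gives $1 - \tilde F^2 = 2p(1-p)\qty(1-\sqrt{1-1/s}) + (1-p)^2/s$. Using the elementary bound $1 - \sqrt{1-1/s} \leq 1/s$,
\[
1 - \tilde F^2 \;\leq\; \frac{2p(1-p)}{s} + \frac{(1-p)^2}{s} \;=\; \frac{(1-p)(1+p)}{s} \;=\; \frac{1-p^2}{s} \;\leq\; \frac{1}{s},
\]
so $\tilde\Delta(\sigma_1,\sigma_2) \leq \sqrt{1/s} = \sqrt{\epsilon}$, which is the claim after interpreting $P\ket{\psi}$ as the subnormalised state $\sigma_2$.

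The main obstacle I expect is precisely this last collapse. Estimating the cross term $2p(1-p)(1-\sqrt{1-1/s})$ and the diagonal term $(1-p)^2/s$ separately and combining by the triangle inequality would overshoot to $\sqrt{2/s}$; keeping the common factor $(1-p)$ explicit so that $2p + (1-p)$ telescopes with it into $(1-p)(1+p) = 1-p^2$ is what delivers the stated bound with no extra constant, matching the $\tfrac{1}{\sqrt{s}}$ failure probability quoted for CI-CC and QC-CC.
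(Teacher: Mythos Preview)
Your proof is correct and follows essentially the same route as the paper: decompose $P_i = P + P_i^c$ using special and strict soundness, compute the subnormalised fidelity via $F(\sigma_1,\sigma_2)=p$ and the trace bound $\Tr\sigma_1 \leq p + (1-p)/s$, then convert to distance. The only difference is that where the paper invokes the $s\gg 1$ first-order approximation $\sqrt{1-1/s}\approx 1 - 1/(2s)$ to reach $\tilde F \gtrsim 1-\epsilon/2$ and hence $\tilde\Delta \lesssim \sqrt{\epsilon}$, your Step~3 telescoping $(1-p)(2p + (1-p)) = 1-p^2$ gives the bound $\tilde\Delta \leq \sqrt{1/s}$ exactly, without any asymptotic approximation---a strictly cleaner finish.
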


\begin{proof}
The proof follows the one above for QSP-CC. We again use the sub-normalised fidelity. Let $\rho = P\dyad{\psi}P$ and $\sigma = \frac{1}{s}\underset{i}{\sum}P_i\dyad{\psi}P_i$. We further define $P_i = P + P_i^c$ and $p = \expval{P}{\psi}$. It follows directly from special soundness that $P_i^c P_j^c = \delta_{i, j} P_i^c$. Since $P_i^c, P_j^c, P$ are orthogonal and the sum of the traces (probabilities) of the $P_i^c$-terms cannot exceed $(1 - p)$, we then have that:

\[\Tr\sigma := \Tr (\frac{1}{s}\underset{i}{\sum}P_i\dyad{\psi}P_i) = \frac{1}{s}\underset{i}{\sum}\expval{(P + P_i^c)}{\psi} = p + \frac{1}{s}\underset{i}{\sum}\expval{P_i^c}{\psi} \leq p + \frac{1}{s}(1 - p)\]

$F^2\qty(P\dyad{\psi}P, \frac{1}{s}\underset{i}{\sum}P_i\dyad{\psi}P_i) = p^2$

\noindent Assuming $\epsilon = \frac{1}{s}$ where $s \gg 1$, with simple calculation we obtain: 

\[\tilde{F}(\rho, \sigma) \geq 
1 - \frac{\epsilon}{2}\Rightarrow\quad \tilde{\Delta}\qty(\rho, \sigma) \leq \sqrt{1 - (1 - \frac{\epsilon}{2})^2} \approx \sqrt{\epsilon}\]

\noindent This result is independent of $\ket{\psi}$ (and $p$) and thus completes the proof.
\qed
\end{proof}

We have $\Phi_1(\cdot)\stackrel{\sqrt{\epsilon}} \approx \Phi_2(\cdot)$ where $\Phi_1(\cdot)$ is the CP-map corresponding to the projection in the $P$ subspace, while $\Phi_2(\cdot)$ is the CP-map corresponding to the real protocol operation (project in one of $P_i$ subspace, randomly chosen from the $s$ possible challenges).
The simulated view is $\sqrt{\epsilon}$-close to the real protocol. Given this measurement, the simulator, (a) if it accepts (
measurement result $P$) 
the state is close to the real protocol, (b) if it rejects, 
it is identical with the real protocol (since for 
abort, 
the exact state of the parties is irrelevant).

\section{Proof of Security}\label{sec proofs}

All the following proofs will be carried out in the OT-hybrid model. This means that parties can at the same time communicate with one another but also rely in certain steps of the protocol on an ideal call to a trusted party (also called oracle) performing an oblivious transfer. During the simulation, the simulator replaces this trusted party and receives all inputs that the adversary sends to it. We make this assumption only in order to make the proof clearer and easier, in a real protocol this ideal functionality can then be replaced with any OT protocol which is secure against malicious quantum adversaries.

We start by proving the $\epsilon$-verifiability for the client.

\begin{lemma}[$\epsilon$-verifiability for the client]
Protocol \ref{qyao cc protool} is $\epsilon_{2}$-verifiable for $P_1$, where $\epsilon_{2} = \qty(\frac{8}{9})^{d}$ for $d = \ceil*{\frac{\delta}{2(2c+1)}}$.
\end{lemma}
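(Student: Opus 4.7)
The plan is to reduce the $\epsilon_2$-verifiability of Protocol~\ref{qyao cc protool} for $P_1$ directly to the $\epsilon_2$-verifiability of the VBQC protocol of~\cite{KW15} (Theorem~\ref{vbqc verif}), which yields precisely $\epsilon_2 = (8/9)^d$. Because $P_1$ is honest, all $s$ copies of the dotted-triple-graph, all commitments, and all OT inputs on $P_1$'s side are honestly constructed from independently sampled randomness per copy; only $P_2$ may deviate.

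First I would handle the coin-toss: since $P_1$'s commitment to $\alpha_1$ is perfectly hiding and $P_1$ opens it honestly, $\alpha = \alpha_1 \oplus \alpha_2$ is uniform in $\{0,\dots,s-1\}$ no matter how $P_2$ chooses $\alpha_2$, and the evaluation index is fixed only after $P_2$'s view of the commitments and quantum states is already determined. Next I would observe that the openings of the check-graph commitments and the OT transcripts carry no information about the secret parameters of graph $\alpha$: the trap-colouring, blinding angles $\theta_{i,q}^{j}$, outcome-masks $r_{i,q}^{j}$, dummy values, and permutations of input angles are chosen independently across the $s$ copies, so the check openings are perfectly simulatable from $P_2$'s viewpoint; and the OT outputs for $P_2$'s input qubits in graph $\alpha$ are exactly the angles $P_2$ would legitimately receive when driving an MBQC input measurement on $b_q$, being functions only of $P_2$'s own input bit and of $\theta_{\alpha,q}^{j}, r_{\alpha,q}^{j}$.

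After these two reductions, the remaining interaction on graph $\alpha$ coincides with an execution of VBQC between an honest client $P_1$ (who prepares the single qubits with independent $\theta$'s, releases measurement angles via decommitments in flow order, and checks trap outcomes) and a malicious server $P_2$. The two structural differences are benign: $P_2$ measuring the final output layer itself and committing to the outcomes before the trap positions in that layer are revealed is operationally identical to the VBQC final trap-check step, merely re-ordered via a commit-then-reveal handshake; and the key-release step for $P_2$'s output occurs after $P_1$ has already either aborted or fixed their own decrypted output, so it cannot influence $P_1$'s verification. Applying Theorem~\ref{vbqc verif} to graph $\alpha$ then yields $\Delta(\tilde\rho(\tilde P_2,\rho_{in}), \rho_{ideal}(\rho'_{in})) \leq \epsilon_2$ with $\epsilon_2 = (8/9)^d$, as claimed.

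The main obstacle is making the above independence/indistinguishability arguments rigorous without unwinding the whole protocol. I would do this via a short hybrid sequence: first replace the check-graph commitment-opening phase by a simulated transcript drawn from fresh randomness, indistinguishable because the underlying randomness is independent of graph $\alpha$ and the commitments are perfectly hiding before opening; then replace each OT call by an oracle that hands $P_2$ precisely the honest angle for their input bit on graph $\alpha$, indistinguishable by the malicious-receiver privacy of the OT; and finally observe that the residual interaction is literally a VBQC instance on graph $\alpha$, so Theorem~\ref{vbqc verif} applies unchanged and gives the claimed bound $(8/9)^d$.
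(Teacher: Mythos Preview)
Your proposal is correct and follows essentially the same approach as the paper: both argue that the OT, the perfectly-hiding commitments, and the unbiased coin-toss leave $P_2$ with no information about the secret parameters of the evaluation graph, so that the interaction on graph $\alpha$ reduces to an instance of the VUBQC protocol of \cite{KW15}, to which Theorem~\ref{vbqc verif} applies directly. Your write-up is somewhat more explicit (the hybrid sequence and the handling of the final commit-then-reveal handshake), but the underlying reduction is the same.
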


\begin{proof}
In this proof we consider that $P_1$ is honest while $P_2$ is malicious. During the first steps of the protocol, $P_2$ only participates in the perfectly secure OT at the end of which they receives their input measurement angles, based on their choice of deviated input $\hat{x}$.

$P_2$ then receives the qubits of all the graphs and the commitments. These are perfectly hiding so no information can be recovered from them until $P_1$ decides to send the decommitment values. Furthermore, deviating at this point on the qubits is equivalent to deviating later.

Because the coin-tossing protocol used is proven secure (the same rewinding techniques as the ones used in the proofs for $P_1$ and $P_2$ can be used), a malicious $P_2$ cannot bias the result. After the result of the coin-toss is known, $P_2$ receives the decommitments for all graphs but the evaluation graph, about which they therefore remain blind.

The evaluation part of the protocol follows exactly the same pattern as the VUBQC protocol in \cite{KW15}, with the difference that instead of announcing the angles of the measurements, $P_1$ sends the corresponding decommitments, which in the case of an honest $P_1$ is perfectly equivalent. This protocol has been proven to be $\epsilon_{2}$-verifiable for $P_1$ in \cite{KW15}, see Theorem \ref{vbqc verif}.

After the computation $P_2$ measures the output qubits and has to commit to the result of the measurements, of which they have to reveal the traps. $P_1$ can therefore verify that all the traps were measured correctly. The only last deviation the server is allowed to perform is a further computation on their binary output, which is by definition allowed even in the ideal case.

From this analysis it follows that the exact same verification properties from the protocol in \cite{KW15} hold for this protocol, namely that our protocol is $\epsilon_{2}$-verifiable for $P_1$, where $\epsilon_{2} = \qty(\frac{8}{9})^{d}$ for $d = \ceil*{\frac{\delta}{2(2c+1)}}$, which completes the proof.\qed
\end{proof}

The honest behaviour of $P_2$ is essentially enforced by the traps (verifiability), while the Cut-and-Choose technique enforces $P_1$ to also behave honestly.

\subsection{Security Against Covert $P_1$}\label{secP1}

\paragraph{Intuition} The proof is an adapted version of Lindell and Pinkas’ proof for malicious $P_1$ \cite{LinPin07} and the proof for covert adversaries of \cite{AumLin07}. The intuition 
is that if they choose to cheat by constructing incorrectly the graphs or the OTs then they will get caught with probability at least $\epsilon_{1}$ during the opening phase of the protocol, whereas if they do construct the graphs and OTs correctly then they acts the same as an honest party would. 
The simulator 
runs the protocol normally until the opening of the commitments by $P_1$. Then, since the coin toss is random even with malicious $P_1$, the simulator does quantum rewinding (see \cite{Unr12}) and learn the rest of the commitments. When the malicious $P_1$ 
reveals their input measurement angles, having obtained the related 
randomness 
the simulator can deduce 
$P_1$'s 
bit-input and send it to the ideal functionality. It then runs the evaluation graph with random input and returns 
the output 
received from the ideal functionality (note that 
the correct decoding key is known to 
the simulator from the rewinding).

\begin{lemma}\label{sec P1}
Protocol \ref{qyao cc protool} is $\epsilon_{1}$-private against a covert $P_1$, where $\epsilon_{1} = \frac{1}{\sqrt{s}}$
\end{lemma}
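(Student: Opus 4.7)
The plan is to build a simulator $\mathsf{Sim}$ that applies Unruh's special quantum rewinding (Appendix \ref{unruh rew}) to the commit/coin-toss/open segment of Protocol \ref{qyao cc protool} in order to extract $P_1$'s effective input $x'$, submit $x'$ to the ideal functionality, and then complete the protocol so that $P_1$'s view is $1/\sqrt{s}$-close to a real run. The construction mirrors the QC-CC simulator of Section \ref{qc-cc}, with added bookkeeping for the OT-hybrid model and for the key-release phase, in the style of \cite{LinPin07,AumLin07}.

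First I would exhibit the three-move $\Sigma$-protocol structure required by Appendix \ref{unruh rew}. The commitment message is the bundle of commitments of Step 2 together with the pair of strings $P_1$ feeds into each of the $n$ OTs of Step 3 (we work in the OT-hybrid model, so $\mathsf{Sim}$ plays the OT ideal functionality and records those pairs). The challenge is the coin-tossed index $\alpha$ of Step 4.(c). The response is the set of decommitments for $i\neq\alpha$ in Step 4.(d). Strict soundness follows from the strict-binding property of the commitments (Definition \ref{strict bind}); special soundness holds because two accepting transcripts with $\alpha\neq\alpha'$ jointly cover the whole index set $\{1,\ldots,s\}$, hence expose $(\theta_{i,q}^{j},r_{i,q}^{j},\delta_{i,q}^{j})$ for every graph, and combined with the recorded OT pairs the decommitted $\prescript{}{b}\delta_{i,q}^{j}$ at the input positions of any single fixed graph uniquely determine $P_1$'s bit-input $x'$. $\mathsf{Sim}$ then runs Steps 1--4 honestly, choosing arbitrary dummy bits as $P_2$'s OT selections, records the first response after challenge $\alpha$, invokes the canonical extractor of Appendix \ref{unruh rew} to rewind to the end of the commitment phase and obtain a second accepting response for a distinct $\alpha'$, and computes $x'$. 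It submits $x'$ to the ideal functionality, receives $f_{1}(x',y)$ and $f_{2}(x',y)$, and resumes the real protocol on the graph that remains unopened after the second run, evaluating it with the dummy $P_2$-input. In Step 5.(b), instead of forwarding the genuine measurement outcomes on $P_1$'s output qubits, $\mathsf{Sim}$ returns bits re-encrypted under the extracted output-layer pads $(\theta,r)$ so they decrypt to $f_{1}(x',y)$; since it knows all trap positions and pad values, the trap check in Step 5.(f) passes whenever an honest run would. Failure of the rewinding to produce a second valid opening, or failure of any earlier check, triggers $\mathsf{abort}_{2}$.

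The closeness argument then applies Lemma \ref{q rew state} with $P_{i}$ equal to the acceptance projector for the opening pattern corresponding to challenge $i$: the joint state after the rewinding is $1/\sqrt{s}$-close to the state of a single honest run conditioned on full openability, and the coin-toss phase is absorbed into the $\Sigma$-protocol's challenge. Perfect hiding of the commitments and $\epsilon_{2}$-privacy of the OT contribute nothing further prior to Step 5. The main obstacle is Step 5.(b): I must argue that substituting re-encryptions of the ideal output for the genuine measurement outcomes of an evaluation run on the dummy $P_2$-input is indistinguishable to $P_1$. This succeeds because the output-layer pad bits $r$ are uniformly random and information-theoretically independent of everything in $P_1$'s view up to that moment, so the substituted bits have exactly the same marginal distribution as honest ones. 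Combining the rewinding bound with these exact indistinguishabilities yields total variation $\epsilon_{1}=1/\sqrt{s}$, establishing $\epsilon_{1}$-privacy against covert $P_1$.
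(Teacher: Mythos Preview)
Your simulator is essentially the paper's: run the OTs with a dummy $P_2$-input, treat the commit/coin-toss/open block as a $\Sigma$-protocol, apply Unruh's special rewinding to open a second challenge, extract $\hat{x}$, call the ideal functionality, and substitute the output-layer bits. The paper additionally inserts a short game-hop sequence (Games 1--6) to formally replace the coin-toss sub-protocol by a direct challenge before invoking Lemma \ref{q rew state}; your one-line ``absorb the coin-toss into the challenge'' glosses over this, but the underlying mechanism is the same.

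There is, however, a genuine error in your justification of the output-substitution step. You write that ``the output-layer pad bits $r$ are uniformly random and information-theoretically independent of everything in $P_1$'s view up to that moment.'' This is false: $P_1$ is the \emph{client}, and it is $P_1$ who chooses every $r_{i,q}^{j}$ in Step 2.(a) of Protocol \ref{qyao cc protool}. The pads are fully known to the adversary. The correct reason the substitution is undetectable is a property of MBQC itself: for every non-output qubit the raw measurement outcome $s_i$ is a uniformly random bit, independent of the input and of the angles, so the transcript of intermediate outcomes produced by evaluating on the dummy $y'$ is identically distributed to one produced on the real $y$. The trap outcomes are input-independent and therefore also identical. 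Only the output-layer bits carry information about the function value, and those the simulator overwrites with $f_1(\hat{x},y)$ re-encrypted under the (now known) evaluation-graph keys, matching exactly what an honest run on $y$ would yield. The paper's proof is silent on this point, so you are right to try to justify it, but your argument as written confuses the roles of client and server and would not survive scrutiny.
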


\begin{proof}
Let us construct the simulator in the following way:

\begin{enumerate}

\item The simulator runs the protocol normally until step 3 :  participates in the OTs, with a random input $y'$ 
instead of $P_2$'s actual input.

\item Then acts 
as the simulator 
in Protocol \ref{qc-cc-prot}: receives the qubits and commitments, participates in the coin-tossing protocol, receives the openings, rewinds, performs the coin-tossing again and receives the other set of openings, performs all the verifications and aborts as 
an honest party would.

\item The adversary decommits the value of their input angles for the new evaluation graph, the simulator deduces $P_1$'s binary deviated input $\hat{x}$ from the knowledge of secret parameters and sends it to the ideal functionality. It gets in return the output value $f_{1}(\hat{x}, y)$ for this deviated input.

\item Then performs the computation on graph $\alpha'$ using a random input $y'$. At the end of the computation replaces the bits of the 
computation positions for $P_1$'s output with the bits received from the ideal functionality (after correcting them 
using the decryption keys 
known due to the rewinding) and returns 
this to the adversary, while committing to what it computed as $P_2$'s output. When $P_1$ reveals the positions of the traps and dummies in the last layer, the simulator verifies that they were traps as an honest $P_2$ and opens the corresponding positions. Receives the keys from the adversary, and outputs whatever the adversary would 
and halts.

\end{enumerate}

\noindent The fact that the simulation is the same up to a negligible factor whether we use a random index chosen by $P_2$ or a coin-tossing protocol can be formalised in terms of a series of games, involving an ideal functionality for coin-tossing (such an ideal functionality takes as input a dummy input $\lambda$ from $P_2$ and returns to both players the same random string $\alpha$):

\noindent\textbf{Game 1.} Here the simulator runs the protocol as usual, with no rewinding.

\noindent\textbf{Game 2.} We replace the protocol for coin-tossing with the ideal functionality. Because the protocol for coin-tossing is secure against malicious quantum adversaries, the distance between the first game and the second is negligible.

\noindent\textbf{Game 3.} Here the simulator sends directly to covert $P_1$ a challenge chosen at random instead of calling the ideal functionality for coin-tossing. This is equivalent to the setup $(com, ch, resp)$ for \cite{Unr12}. This is indistinguishable from the previous game from the point of view of the adversary.

\noindent\textbf{Game 4.} Now we use quantum rewinding so the simulator sends two challenges to $P_1$, this is equivalent to the rewinding performed in \cite{Unr12}. According to the analysis performed in Lemma \ref{q rew state}, here the distance is $\frac{1}{\sqrt{s}}$.

\noindent\textbf{Game 5.} We perform the switch the other way : we replace the simulator sending the random challenges by two calls to the ideal functionality (there is no problem with calling this ideal functionality twice as the input is a dummy input). This is once more indistinguishable for the adversary.

\noindent\textbf{Game 6.} Again we switch back : the ideal functionality calls are replaced by the real coin-tossing protocol. The distance is also negligible and this game represents exactly what happens during our simulation.

The result of this game-based analysis is that the distance between the real execution of the protocol and the simulation the way we perform it for this step is bounded by $\frac{1}{\sqrt{s}}$, up to a negligible quantity.

Following the analysis in Appendix \ref{unruh rew} and Lemma \ref{q rew state}, we get that after the rewinding we have (with $(\hat{x}, y)$ being the input after the adversary's deviation, equivalent to the $\rho_{in}'$ from the definition of verifiability):

\[\Delta\qty(v_{i}(\tilde{\mathcal{P}}_1, \hat{x}), \Tr_{\mathcal{P}_{2, i}}\qty(\tilde{\rho}(\tilde{A}, \hat{x}))) \leq \frac{1}{\sqrt{s}}\]

The state at the end of the simulation is 
$\epsilon_1=\frac{1}{\sqrt{s}}$-close to the real execution. 
\qed
\end{proof}

\subsection{Security Against Malicious $P_2$}\label{secP2}

We first prove a lemma, showing how to construct a graph evaluating to a given fixed output for any input in a way 
indistinguishable from a regular computation 
producing that output for a given (possibly different) input.

\begin{lemma}\label{fake graph}
Given the value $f_2(x, \hat{y})$, there exist a dotted-triple graph, along with commitments identical 
to the evaluation graph, which when evaluated 
for any classical input $x'$ and $y'$ returns the fixed output $f_2(x, \hat{y})$ and is indistinguishable for the 
evaluator (even when given the decommitments 
for the evaluation graph in Protocol \ref{qyao cc protool}) from a 
DT(G) 
computing $(f_1(x, y), f_2(x, y))$ for any $x$ and $y$.
\end{lemma}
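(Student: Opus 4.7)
The approach exploits the blindness of VBQC: from $P_2$'s perspective on the evaluation graph, the distribution of every message they see is statistically independent of the underlying computation angles $\phi$. The plan is to construct the fake graph as a structurally identical DT$(G)$ whose underlying computation implements a trivial constant circuit outputting the fixed bits of $f_2(x,\hat y)$ on $P_2$'s output positions.

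First, I would sample the trap-colouring, the random parameters $\theta_{i,q}^j$, $r_{i,q}^j$, the dummy values $d_{i,q}^j$, and the positions of traps in the output layer from the exact distributions an honest $P_1$ would use. The qubits sent to $P_2$ are prepared as in the honest protocol (states $\ket{+_{\theta_{i,q}^j}}$ with the appropriate $Z$ corrections from dummy neighbours, or the dummies $\ket{d_{i,q}^j}$ themselves), and the commitments are computed exactly as in step 2 of Protocol \ref{qyao cc protool}. Being perfectly hiding, the commitments alone leak nothing.

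The key substitution is to commit, inside this fake graph, to computation angles $\phi'$ implementing a constant circuit that outputs the bits of $f_2(x,\hat y)$ on $P_2$'s output qubits; the values assigned on $P_1$'s output positions are irrelevant because in the proof of Lemma \ref{sec P2} the simulator obtains $P_1$'s output directly from the ideal functionality and it is never compared against $P_2$'s view. Such $\phi'$ exist by universality of the base graph. The critical indistinguishability step is the observation that each $\delta_i = \phi'_i(\phi_i,\mathbf{s}) + \theta_i + r_i\pi$ revealed during the computation is uniform in $\{0,\pi/4,\ldots,7\pi/4\}$ regardless of the underlying $\phi'$, because $\theta_i$ is independently uniform and acts as a one-time pad on the angle; the same argument masks output measurement outcomes via $r_i$.

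The main obstacle is to verify that \emph{every} piece of information that $P_2$ eventually receives for the evaluation graph — the physical qubits, the set of commitments, each decommitted $\delta_{i,q}^j$ revealed along the measurement history, the input-angle openings delivered via the OTs, the trap and dummy positions revealed in the output layer, the keys $r_i$ for output decryption, and the trap outcomes $P_2$ themselves compute — is either drawn from a distribution that does not depend on the choice between $\phi$ and $\phi'$, or else remains information-theoretically masked by secret randomness that is never opened. The trap qubits remain genuine isolated qubits under both constructions, so their deterministic outcomes and associated decommitments coincide, and the output-layer decryption keys are chosen consistently with the precomputed constant-function output. Once this case analysis is carried through, the fake DT$(G)$ yields $f_2(x,\hat y)$ with certainty while producing a view of the evaluation graph that is statistically identical to an honest evaluation of $(f_1(x,y),f_2(x,y))$.
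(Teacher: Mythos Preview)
Your approach is genuinely different from the paper's. You keep the physical qubits entirely honest and only substitute the computation angles $\boldsymbol{\phi}$ by angles $\boldsymbol{\phi}'$ for a constant circuit outputting $f_2(x,\hat y)$. The paper instead leaves almost all parameters (including the $\prescript{\textsc k}{}\delta$'s themselves) uniformly random, and \emph{physically isolates} $P_2$'s output layer by sending extra dummy qubits on the added vertices that link those output base-locations to the rest of $DT(G)$; it then hardcodes the result by setting, for each output computation qubit $q$, $\prescript{\textsc k}{}\delta_q=\theta_q+b_q\pi$ for a random bit $b_q$ (identically for every influence-past $\textsc k$) and committing to keys $\prescript{\textsc k}{}k_q=f_2(x,\hat y)_q\oplus b_q$. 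Both arguments ultimately appeal to the same blindness observation---the never-revealed $\theta$'s one-time-pad every opened $\delta$---for indistinguishability.

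There is, however, a gap in your version: you invoke ``universality of the base graph'' to guarantee the existence of $\boldsymbol{\phi}'$ realising the constant function on $G$. Nothing in Protocol~\ref{qyao cc protool} assumes $G$ is universal; the auxiliary input is merely an MBQC pattern $C$ on $G$ computing $f$, and a minimal such $G$ need not admit an angle assignment computing an arbitrary constant (for instance, a single line graph realises only bijections on its classical input, so no constant output is obtainable by changing angles alone). The paper's construction sidesteps this completely because inserting dummies on added vertices is available in \emph{any} $DT(G)$ and severs the output from the rest of the computation without requiring anything of the underlying pattern. If you are willing to add the hypothesis that $G$ is universal (as it typically is in practice, e.g.\ a brickwork state), your route goes through and has the pleasant feature that the qubit preparation is literally honest; but the dummy-isolation trick is what makes the lemma hold in the stated generality.
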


\begin{proof}
The graph is constructed as such:
\begin{itemize}

\item $P_1$ chooses all the parameters ($\theta_{q}^{j}$, $r_{q}$, $\prescript{\textsc{k}}{}\delta_{q}^{j}$,  and $\prescript{\textsc{}}{b}\delta_{i,q}^{j}$  for inputs) at random for all the base-location qubits apart from those corresponding to $P_2$'s output.

\item For those base locations $q$, for the computation qubits $j$, $P_1$ chooses at random $\theta_{q}^{j}$ and $b_{q} \in \{0, 1\}$ and sets $\prescript{\textsc{k}}{}\delta_{q}^{j} = \theta_{q}^{j} + b_{q}\pi$ for all choices of \textsc{k}. $P_1$ then sets the corresponding decryption keys to $\prescript{\textsc{k}}{}k_{q} = f_{2}(x, \hat{y})_{q} \oplus b_{q}$ for all choices of \textsc{k}.

\item Then $P_1$ prepares all qubits in the correct state for the base locations but instead of sending the correct ones for the “added” qubits for the edges linking $P_2$'s outputs qubits to the rest of the graph, $P_1$ sends dummy qubits (in states chosen from $\{\ket{0}, \ket{1}\}$). This will have the effect of breaking away these qubits from the rest of the graph.

\end{itemize}

Since the dummies isolate the output of $P_2$ from the rest of the graph, $P_2$ obtains 
the result $b_{q}$ when measuring the last layer whatever the previous measurement outcomes are. Thus 
applying the decryption keys receives $f_{2}(x, \hat{y})$. The indistinguishability follows directly from the server’s blindness in the VBQC protocol and the hiding property of the bit commitment scheme:  since the random $\theta_{q}^{j}$s are not revealed at any point server 
cannot get any information from the decommitments if 
done 
as in 
the evaluation graph in Protocol \ref{qyao cc protool}.\qed
\end{proof}

\paragraph{Intuition.} The following proof combines the proofs for a malicious $P_2$ from \cite{LinPin07} and \cite{AumLin07}. The simulator first extracts $P_2$'s deviated input $\hat{y}$ (once again $(x, \hat{y})$ plays the same role as the $\rho_{in}'$ in the verifiability definition) from the OT-protocols, which it then uses to call the ideal functionality and receive $f_2(x, \hat{y})$. Then it chooses at random one graph index for which it will construct the associated graph such that it computes always this value as $P_2$'s output (as in Lemma \ref{fake graph}). 
Then uses the rewinding technique from \cite{Wat09} to bias the coin-toss so this index is picked as the evaluation graph. The other graphs are constructed correctly and all checks pass. Then for the evaluation it follows the same steps as in original protocol, guaranteeing that $P_2$ will receive $f_2(x, \hat{y})$ at the end because the computation over a single graph is verifiable and secure against a malicious $P_2$. While the property of $\epsilon_2$-verifiability for $P_1$ guarantees 
that the protocol is close to the ideal, 
the previous lemma gives the 
construction 
of a simulator 
of this ideal execution. Here we justify the coin tossing protocol, since if $P_2$ chose the evaluation graph we would be unable to obtain the simulator for $P_2$.

\begin{lemma}\label{sec P2}
If Protocol \ref{qyao cc protool} is $\epsilon_{2}$-verifiable for the client, then it is $\epsilon_{2}$-private against a malicious server.
\end{lemma}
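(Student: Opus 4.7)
The plan is to build a simulator that acts as the trusted OT party to extract $P_2$'s effective input, calls the ideal functionality to obtain the correct output, and then runs an almost-honest instance of Protocol \ref{qyao cc protool} in which the evaluation graph is replaced by a ``fake'' graph (from Lemma \ref{fake graph}) whose output is hard-wired to be the value returned by the ideal functionality. Concretely, the simulator proceeds as follows: (i) it plays the OT-ideal functionality during step 3 and thereby reads $P_2$'s choice bits, producing the deviated input $\hat{y}$; (ii) it forwards $\hat{y}$ to the ideal functionality and receives $f_2(x,\hat{y})$; (iii) it picks a uniformly random index $\hat{\alpha}\in\{1,\dots,s\}$ and, using Lemma \ref{fake graph}, prepares the $\hat{\alpha}$-th graph as a fake graph hard-wired to output $f_2(x,\hat{y})$, while the remaining $s-1$ graphs are constructed honestly (with a dummy input for the honest side); (iv) it executes steps 4--5 of the protocol honestly, except that during the coin-toss it applies Watrous' oblivious rewinding (Appendix \ref{wat rew}) so that the outcome $\alpha$ equals the preselected $\hat{\alpha}$; (v) once $\alpha=\hat{\alpha}$ is secured, it opens the commitments of all $s-1$ honest graphs (these are well-formed and pass every check by $P_2$) and then drives the evaluation on the fake graph exactly as an honest $P_1$ would, finally performing the key-release step.

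Indistinguishability is then argued step by step. The OT extraction is perfect in the OT-hybrid model, so (i) and (ii) are free. For (iii) and (v), Lemma \ref{fake graph} already gives that the fake evaluation graph, together with all the commitments that are opened during evaluation, is statistically indistinguishable from a genuine evaluation graph, because the hiding of the commitment scheme and the blindness of VUBQC hide the random $\theta$'s that would be needed to distinguish. For (iv), Watrous' rewinding (Appendix \ref{wat rew}) is applicable because, at the moment of the coin-toss, the simulator's register storing $\hat{\alpha}$ is completely decoupled from $P_2$'s state, so the success probability of hitting $\alpha=\hat{\alpha}$ is exactly $p(\psi)=1/s$, independent of $\psi$; hence polynomially many rewinds produce a state negligibly close to the conditional state where the coin landed on $\hat{\alpha}$. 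Finally, verifiability for $P_1$ (proved in the previous lemma with parameter $\epsilon_2=(8/9)^d$) tells us that in the real protocol the global state on $P_2$'s side, for any deviation, is $\epsilon_2$-close to the ideal: $p_{ok}\,U_f\!\cdot\!\rho'_{in}+(1-p_{ok})[\mathsf{abort}_1]$, which is exactly what the fake-graph construction produces on the simulator side (conditioned on non-abort, the fake graph forces the output to be the one returned by the ideal functionality).

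Combining the three sources of deviation, the overall distance between the simulated view and the real view of $P_2$ is bounded by $\epsilon_2$ (from verifiability, which dominates) plus a negligible term (from Watrous' rewinding) plus another negligible term (from the hiding of the commitments), giving the claimed $\epsilon_2$-privacy. A short game-hopping argument analogous to the one used in Lemma \ref{sec P1} cleanly separates the coin-toss step from the rest, by replacing the real coin-toss by an ideal functionality, then by a simulator that sends $\hat{\alpha}$ directly, then by the rewinding-based simulator, and then reversing the substitutions.

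The main obstacle is the coin-toss step. Naively biasing the coin would break binding or hiding, and classical rewinding is unavailable against a quantum $P_2$. The rewinding must therefore be of Watrous' oblivious flavour, which forces us to structure the simulator so that the event we want to amplify (``$\alpha=\hat{\alpha}$'') has a probability that is \emph{both} non-negligible \emph{and} independent of $P_2$'s internal state; verifying this independence --- which holds only because the simulator picks $\hat{\alpha}$ uniformly and secretly before the coin-toss protocol begins --- is the delicate point. A secondary subtlety is ensuring that the fake graph, once it is the evaluation graph, survives \emph{every} commitment opening that $P_1$ must perform during the interactive evaluation; this is guaranteed by constructing the $\prescript{\textsc{k}}{}\delta_q^j$'s for all influence-pasts $\textsc{k}$ consistently with the chosen $b_q$, as specified in Lemma \ref{fake graph}.\qed
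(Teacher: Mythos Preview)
Your proposal is correct and mirrors the paper's proof essentially step for step: extract $\hat y$ through the OT ideal functionality, query the ideal functionality for $f_2(x,\hat y)$, plant a fake graph (Lemma \ref{fake graph}) at a random index $\hat\alpha$, force $\alpha=\hat\alpha$ via Watrous' oblivious rewinding on the coin-toss, and conclude via $\epsilon_2$-verifiability. The only cosmetic differences are that the paper pre-selects the random $\prescript{}{b}\delta_{i,q}^{j}$ values \emph{before} the OT (so they can be fed in as the OT sender's strings) and does not spell out the game-hop for the coin-toss on this side, but neither affects the argument.
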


\begin{proof}
Let us construct the simulator in the following way:
\begin{enumerate}

\item The simulator chooses at random the values of $\prescript{\textsc{}}{b}\delta_{i,q}^{j}$ for $P_2$'s input and organizes these values in $2n$ sets of length $s$: $\qty{\prescript{}{0}\delta_{i,q}^{j}}_{0 \leq i \leq s-1}$ and $\qty{\prescript{}{1}\delta_{i,q}^{j}}_{0 \leq i \leq s-1}$ for all qubits $q$ corresponding to $P_2$'s input.

\item Then invokes the adversary and receives from them what it would have sent to the trusted party computing the OT: $\hat{y} = (\hat{y}_{1}, \ldots, \hat{y}_{m})$ which is the actual input that the adversary intended to use. The simulator returns 
the corresponding sets of inputs. Here the simulated and real state 
are identical. 

\item The simulator calls the ideal functionality and receives the value $f_{2}(x, \hat{y})$. Chooses the index of the evaluation graph $\alpha$ and for all the check graphs constructs them normally (only now the values of $\theta_{i,q}^{j}$ and $r_{i,q}^{j}$ are determined by $\prescript{}{b}\delta_{i,q}^{j}$ and $\phi_{b}$) and for the evaluation constructs it as in the proof of Lemma \ref{fake graph}. Commits to all those values as it would in the real protocol, in the same order, as well as the decryption keys from Lemma \ref{fake graph}. 
The simulator sends the qubits and gives all these commitments to $P_2$. 
The ideal and real situations are again indistinguishable as shown in Lemma \ref{fake graph}. 

\item The simulator needs to ``bias'' 
the coin toss to 
outputs the $\alpha$ chosen: 
\begin{itemize}

\item It first generates a perfectly binding commitment $\hat{c}$ to a random value $\alpha_{1}$ and sends it to the adversary.

\item Then receives $\alpha_{2}$ from the adversary.

\item If $\alpha_{1} \oplus \alpha_{2} = \alpha$, the simulator continues by decommiting $\alpha_{1}$. Otherwise rewinds back to the beginning of this step, 
with fresh randomness and a new value for $\alpha_{1}$.

\end{itemize}
\end{enumerate}

\noindent It should be noted that \emph{no information is kept between rewinds} as it is only used to ensure that the adversary is forced to pick our fake graph as the evaluation graph. 
The probability of success of the rewinding is $\frac{1}{s}$, which is independent of the initial state of the adversary. We can therefore use the oblivious quantum rewind technique from \cite{Wat09} for this step of simulation. At this point $\mel{\phi_{0}(\psi)}{\rho(\psi)}{\phi_{0}(\psi)} \geq 1 - \epsilon$, where $\ket{\psi}$ is the state of the adversary before this step, $\ket{\phi_{0}(\psi)}$ corresponds to the state after a success happening in one try while $\rho(\psi)$ is the state at the end of the rewinding process. The expected number of rewinds is $\order{\frac{s^2}{s - 1} log{\frac{1}{\epsilon}}}$. Here we want an $\epsilon$ negligible in $n$, so $log{\frac{1}{\epsilon}} = \order{n}$ is sufficient and we get $\order{ns}$ rewinds. More details 
can be found in Appendix \ref{wat rew}.

\begin{enumerate}[resume]

\item 
The simulator opens all commitments as in the protocol for the checks. These 
correspond to the correct graphs 
and thus 
all pass. 
Nothing is leaked about the 
evaluation graph and so the ideal and real states are indistinguishable.

\item The simulator then runs 
the remaining graph and 
runs the computation until the end of the protocol on the modified evaluation graph as 
$P_1$. In the end, outputs whatever the adversary outputs and halts.

\end{enumerate}
Let $k^{P_2}$ be the key committed as the decryption key for $P_2$'s output as part of the fake graph construction in Lemma \ref{fake graph}. At the end of the computation, $P_2$ is in possession of the state $\Tr_{\H_{P_1}}\qty(E_{k^{P_2}}(f(x, \hat{y}))) = E_{k^{P_2}}(f_{2}(x, \hat{y}))$, where $E_{k^{P_2}}(z)$ denotes the encryption as a quantum state of the classical value $z$ under the key $k^{P_2}$ (this key is the one chosen as part of the construction of the fake graph in the previous lemma). At the end of the computation the simulator reveals the key by decommitting it and $P_2$ can decrypt their outcome. The server's view of the real protocol after receiving the key is:

\[\Tr_{\H_{P_1}}\qty(E_{k^{P_2}}\qty(\tilde\rho^{n-1}(\tilde{P}_{2}, \hat{y})))\]

\noindent From Eq. 
(\ref{eq:verification2}) (noting that 
partial trace is a distance non-increasing operation): 

\[\Delta\qty(\Tr_{\H_{P_1}}\qty(E_{k^{P_2}}\qty(\tilde\rho^{n-1}(\tilde{P}_{2}, \hat{y}))), \Tr_{\H_{P_1}}\qty(E_{k^{P_2}}(\rho^{n-1}_{ideal}(\hat{y})))) \leq \epsilon_{1}\]

\noindent  
We showed 
that the simulated view is $\epsilon_{2}$-close to the real view of $P_2$. 
\qed
\end{proof}

\bibliographystyle{abbrv}
\bibliography{biblio}

\end{document}